\newtheorem{lemma}{Lemma}
\newtheorem{theorem}{Theorem}
\newcommand{\ignore}[1]{}
\newtheorem{definition}{Definition}
\newcommand{\gen}{{\ensuremath{\sf{\mathsf Gen}}}\xspace}
\newcommand{\eval}{{\ensuremath{\sf{\mathsf Eval}}}\xspace}
\newcommand{\verif}{{\ensuremath{\sf{\mathsf Verif}}}\xspace}
\newcommand{\adv}{\ensuremath{\sf{\mathcal Adv}}\xspace}
\newcommand{\negl}{{\ensuremath{\sf{\mathsf negl}}}\xspace}
\newcommand{\confgamename}{{\ensuremath{\sf{\mathsf Game^{QConf}}}}\xspace}
\newcommand{\confgameacro}{{\ensuremath{\sf{\mathsf Game^{QConf}_{\adv, {\Large\uppi}{QLB}}}}}\xspace}
\newcommand{\intgamename}{{\ensuremath{\sf{\mathsf Game^{RInt}}}}\xspace}
\newcommand{\intgameacro}{{\ensuremath{\sf{\mathsf Game^{RInt}_{\adv, {\Large\uppi}{QLB}}}}}\xspace}
\newcommand{\fssgamename}{{\ensuremath{\sf{\mathsf Game^{FSS}}}}\xspace}
\newcommand{\fssgame}{{\ensuremath{\sf{\mathsf Game^{FSS}_{\adv, \Sigma}}}}\xspace}
\begin{document}
\title{ $\uppi$QLB: A \textbf{P}rivacy-preserving with \textbf{I}ntegrity-assuring \textbf{Q}uery \textbf{L}anguage for \textbf{B}lockchain}

%**********************************************************************************************%

% author names and affiliations

\author{\IEEEauthorblockN{Nasrin Sohrabi}
\IEEEauthorblockA{Deakin, Australia \\
nasrin.sohrabi@deakin.edu.au}
\and
\IEEEauthorblockN{Norrathep Rattanavipanon}
\IEEEauthorblockA{PSU, Thailand\\
 norrathep.r@psu.ac.th}
\and 
\IEEEauthorblockN{Zahir Tari\\}
\IEEEauthorblockA{RMIT, Australia\\
  zahir.tari@rmit.edu.au}

}

%**********************************************************************************************%

\newcommand{\acro}{{\large$\uppi$}QLB\xspace}

\maketitle

\begin{abstract}

The increase in the adoption of blockchain technology in different application domains e.g., healthcare systems, supply chain managements, has raised the demand for a data query mechanism on blockchain. Since current blockchain systems lack the support for querying data with embedded security and privacy guarantees, there exists inherent security and privacy concerns on those systems. In particular, existing systems require users to submit queries to blockchain operators (e.g., a node validator) in \emph{plaintext}. This directly jeopardizes users' privacy as the submitted queries may contain sensitive information, e.g., location or gender preferences, that the users may not be comfortable sharing with. On the other hand, currently, the only way for users to ensure \textit{integrity} of the query result is to maintain the entire blockchain database and perform the  queries locally. Doing so incurs  high storage and computational costs on the users, precluding this approach to be practically deployable on common light-weight devices (e.g., smartphones). 

To this end, this paper proposes \acro, a query language for blockchain systems that ensures \emph{both} confidentiality of query inputs and integrity of query results. Additionally, \acro enables SQL-like queries over the blockchain data by introducing relational data semantics into the existing blockchain database. \acro has applied the recent cryptography primitive, namely the function secret sharing (FSS), to achieve the \textit{confidentiality} property. 
To support \textit{integrity}, we propose to extend the traditional FSS setting in such a way that integrity of FSS results can be efficiently verified. Successful verification indicates absence of malicious behaviors on the servers, allowing the user to establish trust from the result. To the best of our knowledge, \acro is the first query model designed for blockchain databases with the support for \textit{confidentiality, integrity}, and \textit{SQL-like} queries.

\end{abstract}

\begin{IEEEkeywords}
Information Security, Blockchain, Private Data retrieval.
\end{IEEEkeywords}

\section{Introduction}

\IEEEPARstart{I}{n} $2008$, Nakamoto \cite{nakamoto2008bitcoin} proposed a new way to build a cash system without the need for having a central trusted authority, such as banks, which marked the generation of cryptocurrencies (e.g., Bitcoin and Ethereum). The innovation lied in building trust among parties where the parties cannot trust each other, yet they all agree on one value of the data. The new technology, blockchain, guarantees the integrity, immutability, and tractability of the stored data. Since then, blockchain has been gaining overwhelming attention and applied in many applications such as, energy trading, healthcare systems, security trading, supply chain management, machine learning (federating learning), and rental systems. A blockchain is defined as a decentralized distributed database that runs over a p-2-p (peer-to-peer) network. It is an append-only data structure (a growing list of ordered records) that is stored among all the peers of the p-2-p network \cite{mohan2017tutorial, sohrabi2020zyconchain, dinh2017blockbench, dinh2018untangling, wang2018forkbase, sohrabi2020scalability}. The ordered records, called blocks, are securely linked together using a cryptography primitive, a hash function, forming a chain of blocks. Peers in the underlying network hosting the blockchain do not trust each other; however, the technology yet guarantees the integrity of the data. This is achieved due to: $(i)$ adding the hash of the previous block into each block, and $(ii)$ the underlying consensus protocol of the blockchain which guarantees the peers store the same replica of the blockchain at any given point of time.

From the database perspective, a blockchain is seen as a decentralised database that stores a large collection of records. Many distributed data-intensive applications have adopted the technology which as a result has increased the demand for querying the newly designed database. For example, in a cryptocurrency application (e.g., Bitcoin), a client may want to query the network to find all the transactions that have occurred during a specific time and the amount of transaction is between a range condition, e.g., $ 20 < Transaction_-Amount < 200 $. In another example, consider a blockchain-based healthcare system, where the data stored on blockchain are sensitive information, such as patients' diseases and records, the doctors who visited the patients, the medications given to the patients, etc. A client may want to query the network to find how many patients with a specific disease have visited a specific hospital over a period of time. Existing solutions to query data records in blockchains have several limitations. Some works, such as FlureeDB \cite{FlureeDB}, IBM \cite{IBM-Blockchain}, Oracle \cite{Oracle-Blockchain} and Bigchain \cite{BigchainDB}, provided a SQL-like query language that relies on a central trusted third party (see Figure~\ref{CentralTTP}). That is, the central server makes a separate copy of the blockchain in a format of a relational database and then enables SQL queries over the data. This approach obviously has a major limitation, i.e., relying on a central trusted server. The server might act maliciously and either does not provide all the results or returns wrong information to the client. This jeopardizes the {\it integrity} of the query results. Additionally, a client's query may contain (and release) sensitive information about the querier. A malicious server may learn about this sensitive information and then use it for its own benefit, compromising the {\it confidentiality} of the user's query. Later, in $2019$ \citet{xu2019vchain} proposed vChain to address the query processing limitations of current blockchain databases. vChain removed the need for having a centralized trusted party to perform queries and addressed the {\it integrity} issue. It enabled a client to verify the result received from the servers. vChian fails however to provide {\it confidentiality}. Additionally, vChain supports only one type of queries, i.e., the Boolean range query (Figure \ref{vChainOverview} depicts a simplistic view of vChain). Hence, the need for having a secure, private and SQL-like query processing model for blockchains remains.

\begin{figure}[h!]
\begin{center}
\includegraphics[width=75mm]{./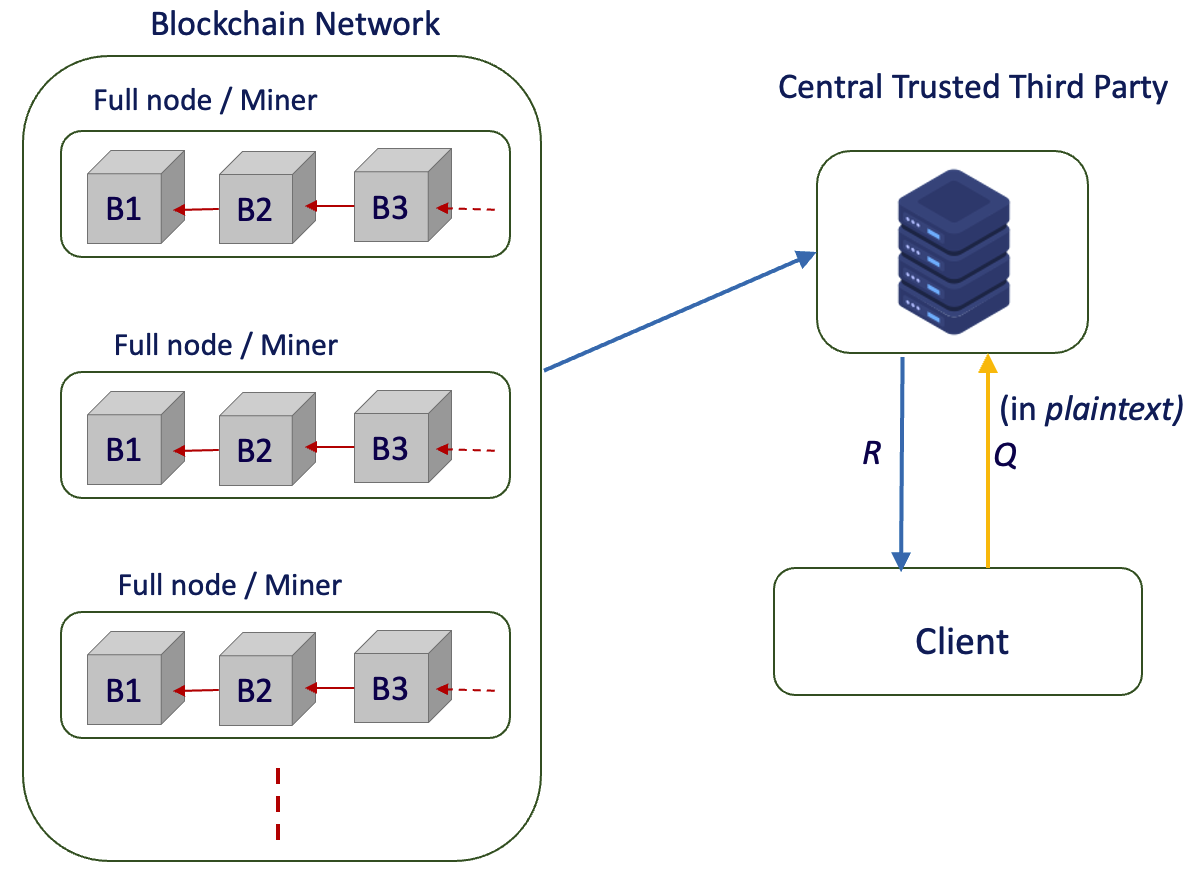}
\caption{An Overview of Query Processing model based on a central trusted third party.} \label{CentralTTP}
\end{center}
\end{figure}

\begin{figure}[h!]
\begin{center}
\includegraphics[width=65mm]{./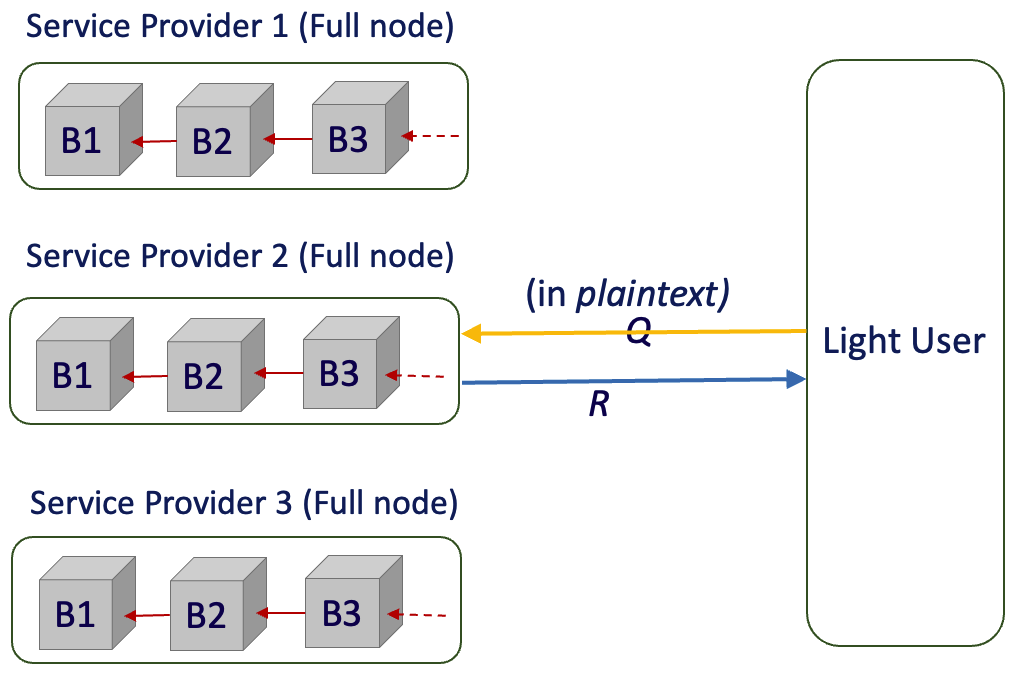}\caption{An Overview of Query Processing Proposed by vChain.} \label{vChainOverview}
\end{center}
\end{figure}

Existing approaches for private query processing have limitations, making them inapplicable for blockchain settings. 
Many private Information Retrieval (PIR) methods \cite{PIR-1995-Chor,xun-2013-PIR, PIR_1998,survey_on_PIR_2007,olumofin2011revisiting,ongaro2014search} assume that a server is honest-but-curious. 
This assumption is rather strong in a public and unreliable domain, such as blockchain, where every node (honest or malicious) can be a server. Additionally,  PIR methods often require multiple round trips to the server to compute the result and therefore consume bandwidth. Other methods based on garbled circuits \cite{bellare2012garblled, goldwasser1997multiparty,yakoubov2017gentle} have high computational and bandwidth costs, making them not applicable for lightweight devices. Encrypted solutions e.g. \cite{popa2011cryptdb, demertzis2020seal, fuller2017sok, papadimitriou2016big,pappas2014blind, popa2014building,mahajan2011depot,li2004secure} are also not applicable for blockchain, as the data on the blockchain is not encrypted. ORAM is another solution proposed to protect data privacy by hiding the access patterns of clients from untrusted servers. The main challenge in ORAM is that a client can download/access only the data that s/he is uploaded/added to the server. This is a different setting in blockchain where everyone can add to the database and clients should be able to retrieve the data added by others. Such work or methods will be explained in details in the related work section, $\S$ \ref{relatedwork}.

To address the aforementioned issues, this paper proposes \acro that aims to: $(i)$ facilitate \textit{SQL-like} queries over blockchain data, $(ii)$ provide \textit{confidentiality} for the queries, and $(iii)$ provide \textit{integrity} for the query results. \acro achieve this by applying the latest cryptography primitive, namely function secret sharing (FSS)~\cite{fss_2015}. Using FSS, \acro divides a client's query into several subqueries and sends each subquery to a miner node. Recall that a typical blockchain network comprises three types of nodes: \textit {full node}, \textit{miner}, and \textit{light node}. The full node maintains a full copy of the blockchain database (block headers and data records), the miner node is a full node that participates in the mining process (generating new blocks), and the light node hosts only the block headers. In \acro, every miner and full node can provide the query processing service to the client, and hence the term \textit{service provider} (SP) is used in this paper to refer to any of such nodes. An overview of \acro is provided in Figure~\ref{Overview-Model}.

\begin{figure}[http]
\begin{center}
\includegraphics[width=85mm]{./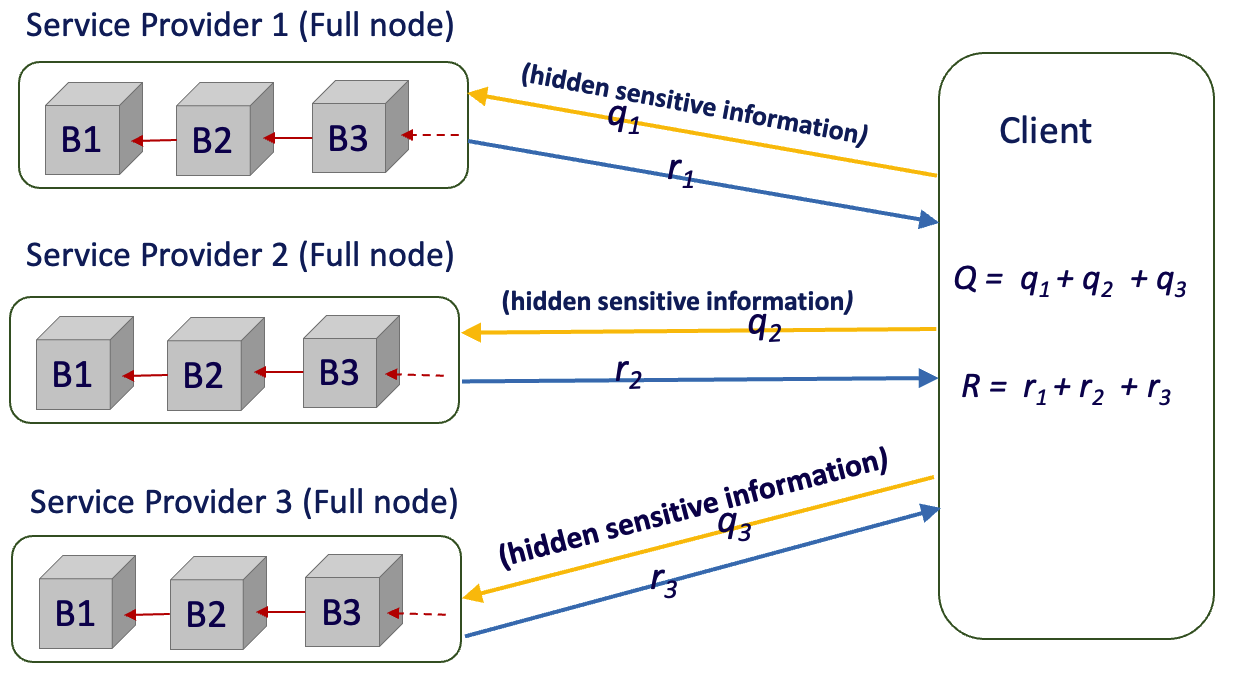}
\caption{\acro Architecture Overview} \label{Overview-Model}
\end{center}
\end{figure}

Consider a common application in a blockchain-based healthcare system that stores records of a tuple of $<T,\ V,\ W>$, where $T$ is the time, $V$ is the cost (it can refer to an insurance number), and $W$ is a set of attributes, $W=<dr_-id, \ pt_-id, \  dr_-spec, \  disease, \ pres, \  hosp, \ loc>$, where $dr_-id$ corresponds to the doctor id (it can be an anonymous id depending on the security requirements of the system), $pt_-id$ is the patient id, $dr_-spec$ is the doctor's specialties, the $disease$ is the disease that has been diagnosed, $pres$ is the treatment that the doctor has prescribed for the patient, $hosp$ is the hospital's name, $loc$ is the geographical location of the hospital. The records are stored in blockchain database which is located in the SPs' premises. A client (e.g., a patient) might want to query the blockchain, e.g., to find all the doctors with specific specialties in nearby hospitals, or to find how many times a specific disease has been diagnosed within a time period in a specific hospital. These queries are submitted to various SPs, and they may however contain sensitive information that clients might not feel comfortable sharing with the SPs. \acro enables clients to hide information based on their specific choice when submitting queries to SPs (\textit{confidentiality}) as well as to prevent malicious SPs from tampering with the query results (\textit{integrity}). 

As depicted in Figure~\ref{Overview-Model}, \acro first divides the client's query into several subqueries. When dividing a query, a client will decide which information needs to be kept hidden (and protected). Each subquery is then sent to a single SP, and the SPs perform the subqueries separately and send the result back to the client.
The client combines the results while checking the integrity. Only when this integrity check passes, it then can obtain the final query output. 
With the assumption that at least one service provider (SP) is honest, \acro is proven to provide \textit{query confidentiality} and \textit{response integrity} security properties. 
It also guarantees that the query is evaluated on all the expected records.

The rest of this paper is organized as follows. Motivating examples are provided in $\S$ \ref{running_example} followed by a description of FSS in $\S$\ref{fss}. The architecture of \acro is detailed in $\S$\ref{architecutre} and the evaluation results are provided in $\S$\ref{evaluation}. Existing work is elaborated in $\S$ \ref{relatedwork} 
 and $\S$\ref{conclusion} concludes the paper.

 \section{Motivating Examples} \label{running_example}
 
 This section provides use-case examples to motivate the importance of security and privacy of query processing in blockchain databases. 
 
 \subsection{Use-case examples}
 
 Consider a blockchain-based stock market where the data records in the blockchain are about the information related to the various stocks (e.g., share prices, available shares, transaction amount, number of shares). In such a system, the SPs will host the information related to the stock market. 

To run transactions (e.g., buy, sell, or check the market prices on specific shares), a client will first need to query the system by sending a request to the SPs. The query will contain information that if the (potentially malicious) SP learns, it can impact the result. For example, if an SP learns that a specific stock (mentioned in a client's query) is either a hot stock (i.e., many clients queried it) or people are bidding on opening price, then the malicious SP could tamper the result and influence the market (e.g., increase the actual price of a specific share). However, if integrity of the response can be verified, the malicious SP will be not able to carry out such an activity.

 In another example, let us consider a blockchain-based rental-car system where car information (e.g., price, model, built-year, color, customer pick-up location, customer drop-off location, city, and rental-period) is the data recorded in the blocks. To rent cars, clients will first perform a search to find their preferred ones. To do so, clients send queries to various SPs. Such queries might however contain sensitive information (e.g.,  pick-up \& drop-off locations, price ranges, or rental-period). If SPs learn such confidential information, it can jeopardize the client's privacy.

\subsection{Query examples}

\acro enables running SQL-like queries over blockchain-based systems. In this section, we describe some examples of such queries. 

Using the rental-car example, we assume that a client wishes to find the minimum price for a specific car model with a specific color. This client also prefers to hide the car model and the color of the car in the query. Hence, the client generates the following query:

\justify{SELECT MIN($price$) WHERE $car_-model$ = ? $\wedge$  $color$ = ?}

Note that ``$?$" indicates the sensitive information that \acro must hide from SPs when sending the query.

Let us consider another example, namely a healthcare system. A client may need to find the total number of patients diagnosed with a specific disease in a specific hospital, or the total number of patients visited a specialist at a specific hospital/location.

Using \acro, the client can hide the information s/he prefers (e.g., $disease$ and $dr_-spec$) by generating the following query: 

\justify{SELECT COUNT($pt_-id$) WHERE $disease$ =? $\wedge$ $dr_-spec$ =?}

\section{Background} \label{fss}

This section provides first some background about function secret sharing (FSS) that serves as the main building block in \acro. Then, we describe the problem stemming from directly applying FSS-based methods to our target setting.

\subsection{FSS Background}

Let $\mathbb{G}$ denote an Abelian group and an integer $p > 1$.
FSS~\cite{fss_2015,fss_ccs_2016,de2022lightweight} is defined by a tuple (\gen, \eval): 
  \begin{itemize}
      \item $\gen(1^{\uplambda}, f) \rightarrow K_1, ..., K_p$: It gets a security parameter $1^\uplambda$ 
      and a function $f : \{0, 1\}^n \rightarrow \mathbb{G}$, then splits $f$ into $p$ function shares (or keys) ($K_1, ..., K_p$).
      \item $\eval(K_i, x) \rightarrow y_i$: It takes $K_i$ and $x$ as input and outputs $y_i$ which is the result corresponding to the party $i$.
  \end{itemize}

FSS provides both correctness and security. FSS's correctness ensures that adding up all outputs from \eval is equivalent to the original $f$, i.e., $\sum_{i=1}^p y_i = y = f(x)$. For security, FSS guarantees any strict subset of $K_i$ does not reveal anything about the original function $f$. The work in~\cite{fss_2015} provides a game-based definition of FSS security in Definition~\ref{def:fss-game}. This game models an adversary (\adv) capable of compromising $p-1$ shares and aiming to use this knowledge to identify which \adv-generated function ($f^0$ or $f^1$) is used to generate these $p-1$ shares. Theorem~\ref{thm:fss} states that an FSS scheme is considered secure if \adv cannot distinguish $f^0$ and $f^1$ based on $p-1$ output shares with more than a negligible probability. In other words, without all $p$ shares, \adv cannot infer anything about the original $f$.

\begin{definition}
	\label{def:fss-game}
	\textbf{FSS Security Game (\fssgamename)~\cite{fss_2015}}:
	
	\begin{enumerate}
		\item Let $\Sigma = (\gen, \eval)$.
		\item \adv selects two functions $f^0, f^1 : \{0,1\}^n \rightarrow \mathbb{G}$ and gives these functions to the challenger.
		\item The challenger samples $b \leftarrow \{0,1\}$ and computes $(K_1, ..., K_p) \leftarrow \gen(1^\uplambda, f^b)$. Then he gives $p-1$ shares, e.g., $(K_1, ..., K_{p-1})$, to \adv.
		\item \adv outputs a guess $b' \in \{0,1\}$. The game outputs $1$ if $b'=b$; otherwise, it outputs $0$.
	\end{enumerate}
\end{definition}

\begin{theorem}
	\label{thm:fss}
	A scheme $\Sigma = (\gen, \eval)$ is secure if for all probabilistic polynomial-time \adv, there exists a negligible function \negl such that:
	
	\begin{equation*}
	Pr[\fssgame(\uplambda) = 1] \le \negl(\uplambda)
	\end{equation*}
	
\end{theorem}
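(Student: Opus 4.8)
The plan is to certify the stated inequality for $\Sigma$ by a generic reduction to the security guarantee of the underlying FSS primitive $(\gen, \eval)$ established in \cite{fss_2015}, without committing to any particular realization of \gen. Concretely, I would show that if the strict subset of keys handed to \adv carries no computational information about the split function, then no probabilistic polynomial-time \adv can make \fssgame output $1$ beyond the negligible bound claimed in the statement. This keeps the argument at the level of the abstract primitive rather than hybridizing over the internals of a specific key-generation procedure.

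First I would unfold \fssgamename from Definition~\ref{def:fss-game}: \adv commits to $f^0, f^1$, the challenger samples a hidden bit $b$, runs $\gen(1^{\uplambda}, f^b)$, and releases only $(K_1, \dots, K_{p-1})$ before \adv emits its guess $b'$. The key observation is that \adv's entire view is a function of this $(p-1)$-tuple of keys together with \adv's own randomness, so the guess $b'$ is a randomized function of $(K_1, \dots, K_{p-1})$ alone. I would then exhibit a distinguisher $D$ for the two key distributions that feeds its challenge tuple to \adv and outputs whatever \adv outputs; since $D$ reproduces \adv's game view verbatim, $D$'s distinguishing behaviour coincides with \adv's behaviour in \fssgame. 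By the FSS security property imported from \cite{fss_2015}---namely that any strict subset of the $p$ generated keys is computationally independent of the function being split---$D$'s advantage is negligible, which bounds the probability that \fssgame outputs $1$ by $\negl(\uplambda)$ as claimed.

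The step I expect to be the main obstacle is making this reduction faithful in two respects. First, \adv chooses $f^0, f^1$ adaptively, so $D$ must forward exactly these committed functions to its own key-distribution challenger before any keys are revealed; this is handled by having $D$ relay \adv's commitment and request keys for the matching $(p-1)$-subset. Second, FSS correctness forces the full set of $p$ keys to be correlated (their \eval outputs sum to $f(x)$), so indistinguishability holds only once the withheld $p$-th key is discarded; the reduction must therefore never query or touch $K_p$. Once $D$ is confined to the strict subset, the equality of views---and hence of success probabilities---follows immediately, completing the argument.
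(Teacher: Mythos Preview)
There is a fundamental misreading here: Theorem~\ref{thm:fss} is not a claim to be proved but a \emph{definition} of what it means for an FSS scheme to be secure. The sentence ``A scheme $\Sigma$ is secure if \dots'' introduces terminology; it does not assert that any particular $\Sigma$ satisfies the inequality, nor that all $(\gen,\eval)$ pairs do. Accordingly, the paper offers no proof of this statement---it sits in the background section as the security notion imported from~\cite{fss_2015}, to be \emph{assumed} of the underlying FSS building block and then invoked in the reductions for Theorems~\ref{thm:conf} and~\ref{thm:int}.

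Your proposed reduction is therefore circular. You plan to bound \adv's advantage in \fssgamename by appealing to ``the FSS security property imported from~\cite{fss_2015}---namely that any strict subset of the $p$ generated keys is computationally independent of the function being split.'' But that property \emph{is} Theorem~\ref{thm:fss}; the game in Definition~\ref{def:fss-game} is precisely the formalization of ``a strict subset of keys reveals nothing about $f$,'' and your distinguisher $D$ would be invoking the very statement you set out to establish. If instead your intent was to show that some \emph{specific} construction satisfies this definition, that is a different and substantive task, but it is not what Theorem~\ref{thm:fss} asserts and it would require opening up the internals of \gen---exactly what you said you wished to avoid.
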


In this work, we consider two types of $f$ for FSS:
\begin{itemize}
    \item Distributed point function $f_{a,y}$ that is evaluated to a fixed element $y \in \mathbb{G}$ on a specific input $a$, and to $0$ otherwise, i.e.:
    
    \begin{equation} \label{one_condition}
        f_{a,y}(x)= 
        \begin{cases}
            y, & \text{if } \   x = a \\
            0,              & \text{otherwise}
        \end{cases}
    \end{equation}
    
    \item Interval function $f_{a,b,y}$ that is evaluated to $y \in \mathbb{G}$ only when an input is within a specific interval $[a,b]$, and to $0$ otherwise, i.e.:
 
    \begin{equation} \label{range_condition}
        f_{a,b,y}(x)= 
        \begin{cases}
            y, & \text{if } \  a \leq x \leq b \\
            0,              & \text{otherwise}
        \end{cases}
    \end{equation}
    
\end{itemize}

For these two types, Theorem~\ref{thm:fss} implies confidentiality of $a$, $b$ and $y$, i.e., their values cannot be inferred from any of the function shares.

\subsection{Using FSS for blockchain} \label{fss_blockchain}

Blockchain can be considered as a public database. It is a decentralized distributed database where the data records are transparent for public, i.e., everyone can access the data. 
To query the public database with privacy-preserving guarantees, several studies have been conducted and different models have been proposed. Recently,  with the introduction of FSS, several researchers proposed different models to privately query the public records~\cite{MinMax_boyle2021fss, fss_ccs_2016, Splinter2017}. 
In such models, a client generates keys using FSS with $ y = 1$ in Eq.\ref{one_condition} or \ref{range_condition}, and sends these shares to the servers. A given server evaluates the function share on each record of the database, sums the evaluation together and returns the result to the client. The client upon receiving all the results adds them up and finds the result of the query.

In all these settings, while FSS guarantees confidentiality of client queries, a client is unable to verify the integrity of the results received from servers.
In particular, a malicious server can introduce an error $\delta$ to the final output without being detected by the client. This can be done by having the malicious server respond to the subquery with $y_{\adv}+\delta$, instead of $y_{\adv}$, where $y_{\adv}$ is a benign output from $\eval(K_{\adv}, x)$. In this case, after aggregating all outputs, the client obtains the incorrect result that is off from the expected value by $\delta$, i.e., $(\sum_{i=1, i \neq \adv}^p y_i) + (y_{\adv}+\delta) = (\sum_{i=1}^p y_i) + \delta = y+\delta = f(x)+\delta$.

In the blockchain context, verifiable integrity of the query response is of paramount importance due to the followings:

\begin{itemize}
    \item Blockchain is a decentralized system, i.e., there is no central authority controlling blockchain; hence, some servers might be malicious. It is therefore important that a client can verify the query response. This is different for public databases, as they are owned by organizations that can protect servers and detect malicious parties at early stages. 

    \item The size of a blockchain database is significantly high and lightweight devices are unable to download the entire database to perform queries locally. Hence, it is important to enable them to verify the query responses. 
   
\end{itemize}

This work describes \acro as a solution to address the shortcomings of existing models. \acro makes use of FSS to enable private search of blockchain databases, while providing the integrity of the query results and retaining the original goal of ensuring the confidentiality of clients' queries.

\section{\acro} \label{architecutre}

%\acro aims to protect clients against malicious servers which could: $(i)$  return a wrong result, {\it active \adv}; $(ii)$ access sensitive information in a client's query, {\it passive \adv}. It achieves this by hiding the sensitive information from queries, to protect against passive \adv, and by enforcing a random value on $y$ (instead of fixing it to $1$) during $f$ construction, to combat active \adv.
%This section provides an overview of \acro's architecture, the security goals, and the threat model. 

\subsection{System Model and Syntax}

As shown in Figure~\ref{fig:architecture_detail}, \acro is comprised of two entities:
\begin{itemize}
    \item {\it Service Provider} (SP) is the full node or the miner in blockchain terminology. An SP retains a full replica of the blockchain database and is able to provide query processing services.
    \item {\it Client} is a user of the system. It sends query requests to SPs. A client can be either a \textit{light node} or an \textit{external node}. As mentioned, a light node in the blockchain terminology refers to the node that only hosts the block header data. However, an external node is the node/user who does not hold any of he blockchain data and does not exist in the blockchain network. It only sends queries to SPs.
\end{itemize}

\begin{figure}[h!]
\begin{center}
\includegraphics[width=85mm]{./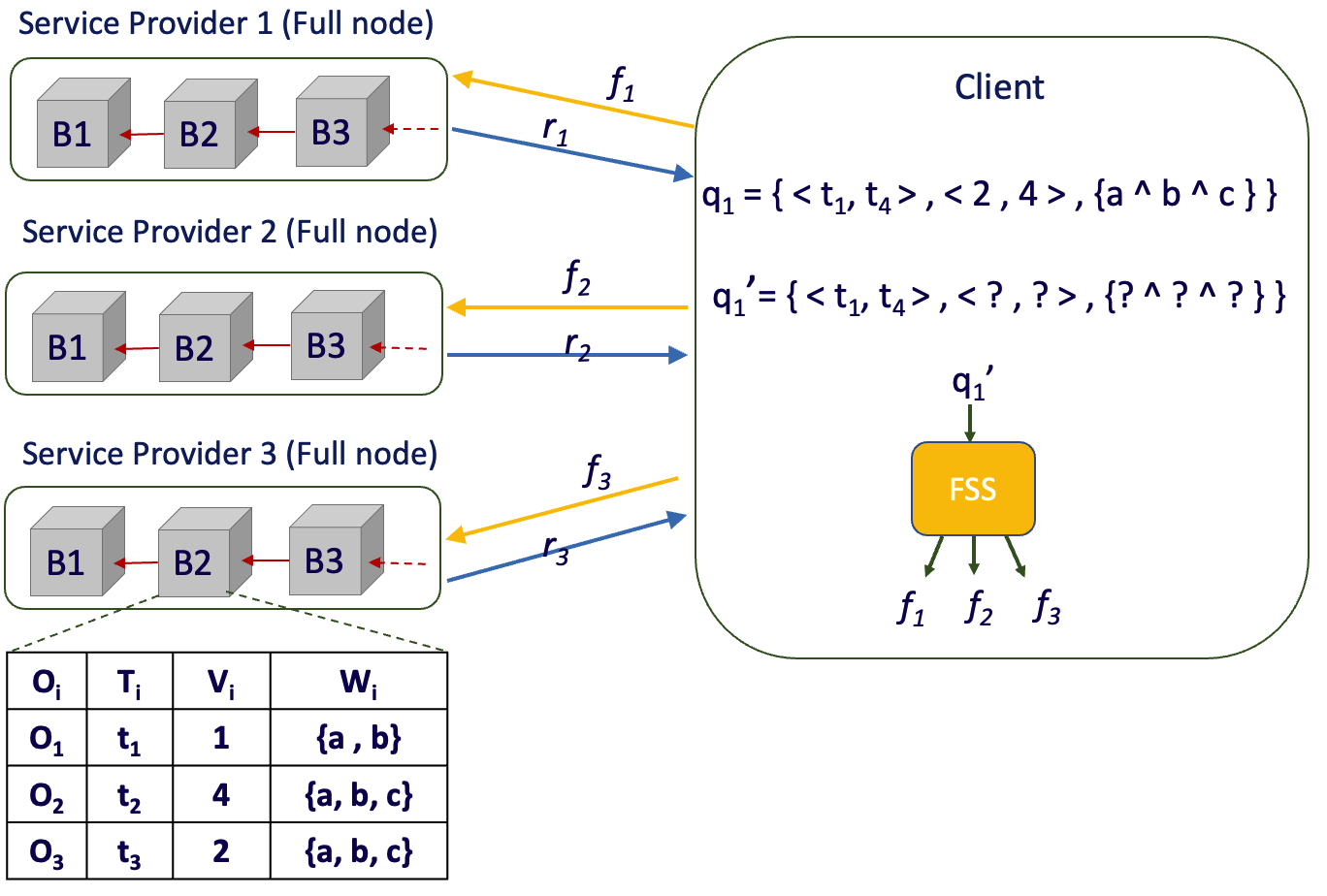}
\caption{\acro Architecture Detail} \label{fig:architecture_detail}
\end{center}
\end{figure}

In terms of algorithms, \acro consists of a tuple $(\gen, \eval, \verif)$, defined as follows: 
\begin{itemize}

    \item $\gen(1^\uplambda, q, s) \rightarrow q', K_1,...,K_p$: It takes as input a security parameter $1^\uplambda$, a query with the format described in $\S$\ref{section:query_format} and a set of secret expressions $s$ in $q$. It then outputs $p$ function shares and a private query $q'$ that removes $s$ value (i.e., replacing it with $?$ as in Section~\ref{running_example}). This function is performed by the client.
    
    \item $\eval(K_i, D, q') \rightarrow y_i$: It takes as input a function share $K_i$, which corresponds to each SP, $D$ -- the entire data from blockchain and a private query $q'$. It then performs an evaluation using $K_i$ and $q'$ on $D$ and outputs $y_i$. 
    This function is performed by a SP.
    
    \item $\verif(y_1, ..., y_p) \rightarrow R$: It takes as input all the share outputs received from the SPs. Then, it verifies whether all $y_i$-s are honestly generated on the same blockchain data. If it detects any malformed $y_i$, it outputs $\perp$ (i.e., aborts). Otherwise, it produces the query result $R$. This function is performed by the client.
    
\end{itemize}

\subsection{Query Model} \label{section:query_format}

 \acro is designed to facilitate the execution of SQL-liked queries on blockchain databases. It first limits the number of blocks that SPs need to search over, then transforms the SQL syntax into a format compatible with the blockchain system. It then uses FSS \cite{fss_ccs_2016} to divide the query into $p$ shares (queries).
 
  Fig~\ref{table:QueryFormat} depicts the \acro's query format, where 
  {\it type} refers to one of the following query types: {\it SUM}, {\it COUNT}, {\it AVG}, {\it MIN} and {\it MAX} queries. 
  {\it blk\_range\_cond} specifies the time range, i.e., $ t_1 < blk_-range_-cond < t_2$, where $t_1$ is the lower bound and $t_2$ is the upper bound. This condition is necessary because in blockchain there is no table to be specified in order to narrow down the search on a database; but rather the full history of all the generated blocks since the genesis block. Hence, it is a must to narrow down the number of blocks that the SP needs to search over; otherwise, the search can be too expensive, time-consuming or may never stop. Thus, \acro shortens the number of blocks by introducing the {\it blk\_range\_cond}, $t_1$ must be bigger than $t_g$, which refers to the time that the genesis block was generated, and $t_2$ must be lower than $t_c$, which is the current time. \acro has considered another condition for {\it blk\_range\_cond} (i.e., the difference between $t_1$ and $t_2$) which should be limited and must not exceed a threshold $\uptau $. This is to ensure that the number of blocks generated between $t_1$ and $t_2$ does not exceed a given threshold. The value of $\uptau$ depends on the block generation rate of the network. If the block generation rate is high, then $\uptau$ must be low; and if the block generation rate is low, then $\uptau$ can be bigger. 
 
\begin{figure}[h!]
\begin{center}
\includegraphics[width=80mm]{./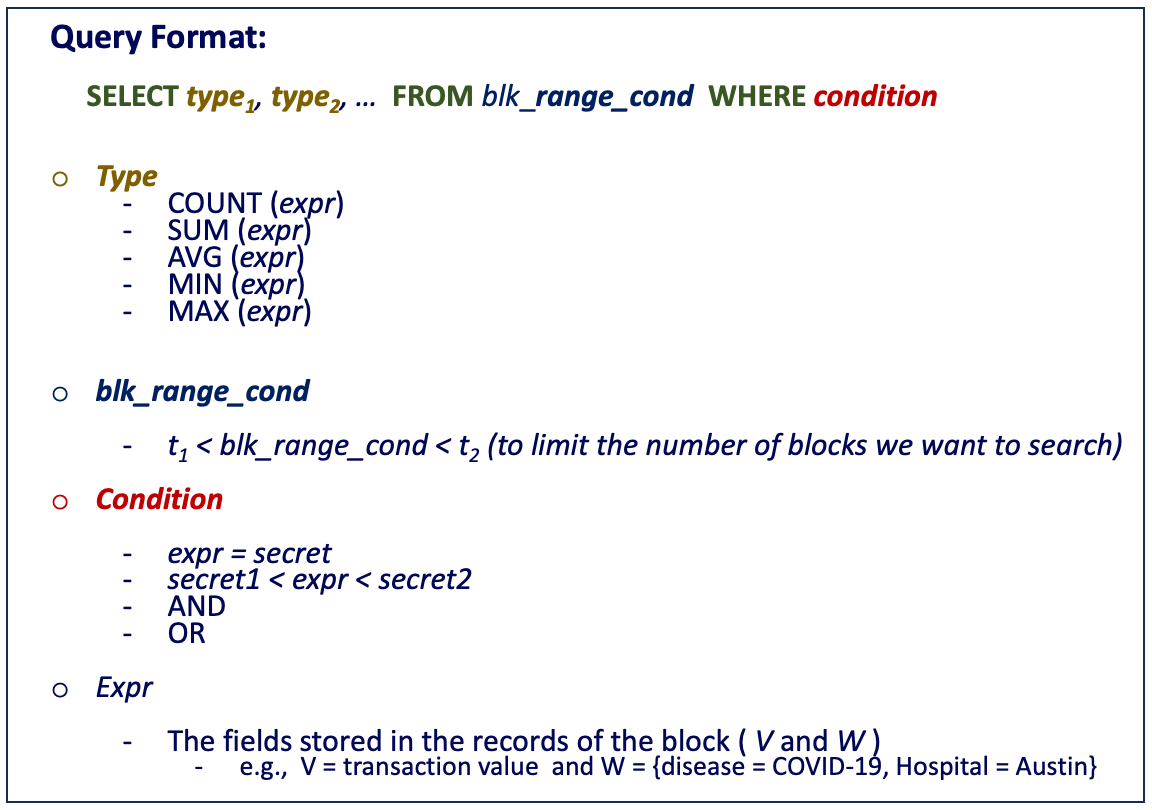}
\caption{Query format in \acro} \label{table:QueryFormat}
\end{center}
\end{figure}

\acro is designed to support two different conditions: $(i)$ {\it single}, and $(ii)$ {\it range} condition.

\begin{itemize}
    \item {\it Single condition} is when the query condition has only one value. That is in the query format where {\it expr} in the condition equals to only one secret value. In this case, the $\gen$ function defines $f$ based on Eq.\ref{one_condition}.
    
    \item {\it Range condition} is when the condition is within a range interval.
    In this condition, the $\gen$ function defines $f$ following Eq.\ref{range_condition}.
\end{itemize}

Additionally, \acro is designed to support two query models: {\it simple} and {\it complex}. The simple model refers to when a query has only one condition (either single or range). The complex model refers to when a query is composed of several conditions using $AND$ and $OR$. Note that, \acro supports not more than one range condition in one query request.

Figure~\ref{fig:architecture_detail} illustrates how a query processing works in \acro, which involves the following steps:
  
\begin{enumerate}
  \item The client generates a query, $q$ and specifies the secret information (i.e., the information that the client prefers not to share with SPs) to specify the private query $q'$.
  \item Using \acro.\gen function, the query is divided into several subqueries/function shares, depicted as $f_i$ in Figure~\ref{fig:architecture_detail}. 
  \item The client forwards the subqueries to SPs.
  \item SPs execute the subquery using \acro.\eval function and return the results back to the client. 
  \item The client combines the results and generates the final output of the query using the \acro.\verif function. This function enables the client to either \textit{abort} or acquire the query's response.  
\end{enumerate}

\subsection{Threat Model}

\acro assumes that at least one of the SPs is honest, i.e., it always honestly follows the protocol and does not collude with other SPs. The rest of the SPs can be malicious. The malicious SP (\adv) may attempt to learn the sensitive information from the client's query in order to set up different attacks. For example, in a blockchain-based stock market application, if \adv finds out the shares that the client is interested in purchasing, \adv may modify the price of the share to charge the client more; or \adv may use the location information to endanger the client. Moreover, \adv can also tamper with the query and the response. Tampering with the query means the adversary may search on a different query instead of the client-provided query. Tampering with the response leads to returning an incorrect or forged response. 

We assume that the client is not malicious. A malicious client is a client that attempts to access private data records or tampers with the existing records in the server. This is inapplicable here because \acro is proposed for the public data records, where the database is not private and not directly modifiable by the client. Hence, everyone can see the data publicly. \acro also assumes that the communication between an SP and the client is over a secure channel (e.g., SSL) and the user of the system is not compromised, i.e., the client follows the protocol. 

\subsection{Security Properties}

At high level, \acro provides two security properties: 
\begin{enumerate}
    \item \textbf{Query Confidentiality}: it ensures that sensitive information from the query remains hidden to SPs.
    
    \item \textbf{Response Integrity}: it guarantees detection when malicious SPs tamper with the query responses.
\end{enumerate}

The Query Confidentiality property is defined using the security game provided in Definition~\ref{def:conf}. It models \adv who can compromise $p-1$ SPs with the goal of learning some sensitive information about the query. It assumes \adv has oracle access to all \acro algorithms. \adv is allowed to submit two queries to the challenger as long as these queries contain the same structure and differ only by the secret value in their query condition. The challenger randomly selects one of the two submitted queries, computes the private query $q'$ and $p-1$ shares from the selected query and outputs them to \adv. \adv wins the game if it can identify which query is used by the challenger.~\\

\begin{definition}
\label{def:conf}
\textbf{Query Confidentiality Game (\confgamename)}:

\begin{enumerate}
    \item Let $\Sigma = (\gen, \eval, \verif)$.
    \item \adv is given oracle access to $\Sigma$ and $D$ blockchain data.
    \item \adv selects a list of secret expressions $s$ and two queries $q_0, q_1$, where these queries differ only by the $s$ value, and gives $s, q_0, q_1$ to the challenger.
    \item The challenger samples $b \leftarrow \{0,1\}$, calls $(q', K_1, ..., K_p) \leftarrow \gen(1^\uplambda, q_b, s)$, 
    and outputs $q'$ and $p-1$ shares (e.g., $K_1, ..., K_{p-1}$) to \adv.
    \item \adv outputs a guess $b' \in \{0,1\}$. The game outputs $1$ if $b=b'$; otherwise, it outputs 0.
\end{enumerate}
\end{definition}

Next, we formalize the Response Integrity property with the security game provided in Definition~\ref{def:int}.
It models \adv that aims is to manipulate the query response by gaining control of $p-1$ SPs.~\\

\begin{definition}
\label{def:int}
\textbf{Reponse Integirty Game (\intgamename)}:

\begin{enumerate}
    \item Let $\Sigma = (\gen, \eval, \verif)$.
    \item \adv is given oracle access to $\Sigma$ and $D$ blockchain data.
    \item \adv selects a query $q$, a list of secret expressions $s$ and sends them to the challenger.
    \item The challenger calls $\gen(1^\uplambda, q,s)$, producing $q', K_1, ..., K_p$. He then gives $q'$ and $p-1$ shares (says $K_1, ..., K_{p-1}$) to \adv.
    \item \adv produces $y_1^{\adv}, ..., y_{p-1}^{\adv}$ and gives them to the challenger.
    \item The challenger computes $y_p = \eval(K_p, D, q')$ and $R' = \verif(y_1^{\adv}, ..., y_{p-1}^{\adv}, y_p)$. He also directly queries $q$ on $D$, producing $R$.
    \item The game outputs $1$ if $R' \neq \perp$ and $R' \neq R$,; otherwise, it outputs $0$.
\end{enumerate}
\end{definition}

Ultimately, we want to prove Theorem~\ref{thm:conf} and Theorem~\ref{thm:int} below 
to formally guarantees Query Confidentiality and Response Integrity, respectively, in \acro.~\\

\begin{theorem}
\label{thm:conf}
\acro provides Query Confidentiality if for all probabilistic polynomial-time \adv there exists a negligible function \negl such that:

\begin{equation*}
    Pr[\confgameacro(\uplambda) = 1] \le \negl(\uplambda)
\end{equation*}
\end{theorem}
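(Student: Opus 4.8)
The plan is to prove Theorem~\ref{thm:conf} by a reduction to the FSS security guarantee of Theorem~\ref{thm:fss}. The key intuition is that, in \acro, the only query-dependent material handed to \adv consists of the private query $q'$ and the $p-1$ function shares. By the rules of \confgamename\ (Definition~\ref{def:conf}), the two challenge queries $q_0$ and $q_1$ have an identical structure and differ only in the secret expressions $s$, all of which are masked by ``$?$'' in $q'$; hence $q'$ is literally the same string regardless of the challenge bit and carries no information about $s$. Any distinguishing advantage must therefore come entirely from the $p-1$ shares, which are precisely FSS shares of functions $f_{a,y}$ or $f_{a,b,y}$ whose hidden parameters encode the secret values.

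First I would treat the \emph{simple} case, where the query has a single condition and thus a single FSS instance. Given a PPT adversary \adv against \confgamename\ with advantage $\varepsilon$, I would construct a PPT adversary $\mathcal{B}$ against \fssgamename: $\mathcal{B}$ runs \adv, and when \adv submits $(s, q_0, q_1)$, $\mathcal{B}$ maps the two secret values to the corresponding FSS functions $f^0, f^1$ (of the form in Eq.~\ref{one_condition} or Eq.~\ref{range_condition}) and forwards them to the FSS challenger. The challenger returns $p-1$ shares $(K_1,\dots,K_{p-1})$ of $f^b$; $\mathcal{B}$ computes the common private query $q'$ on its own and hands $(q', K_1,\dots,K_{p-1})$ to \adv, finally echoing \adv's guess $b'$. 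Since $q'$ is identical for both queries, this is a perfect simulation of \confgamename, so $\mathcal{B}$ wins exactly when \adv does, inheriting advantage $\varepsilon$; Theorem~\ref{thm:fss} then forces $\varepsilon$ to be negligible.

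A point requiring care is the oracle access granted to \adv in step~(2) of Definition~\ref{def:conf}. Because the blockchain data $D$ is public and \gen, \eval, \verif\ are public algorithms, $\mathcal{B}$ can answer every oracle query by simply running the algorithm on the adversary-chosen inputs: \eval\ and \verif\ queries are evaluated directly, and \gen\ queries on inputs unrelated to the challenge are answered with freshly sampled FSS keys generated by $\mathcal{B}$ itself. None of these responses depend on the challenge bit $b$ or on the FSS challenger's secret randomness, so the perfect-simulation claim is preserved.

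The main obstacle, and where I expect to spend most effort, is the \emph{complex} case: a query combining several conditions via $AND$/$OR$ induces several independent FSS instances (one per secret condition, with at most one range condition as stipulated in $\S$\ref{section:query_format}), so a single invocation of Theorem~\ref{thm:fss} no longer suffices. Here I would apply a standard hybrid argument, defining games $H_0, \dots, H_m$, where $m$ is the number of secret conditions, with $H_0$ using every secret from $q_0$, $H_m$ using every secret from $q_1$, and each step flipping one condition's secret. Adjacent hybrids $H_{j-1}, H_j$ differ in exactly one FSS instance and are indistinguishable by the single-instance reduction above, so the overall advantage is bounded by $m$ times a negligible quantity and stays negligible since $m$ is polynomially bounded. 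The delicate bookkeeping is to verify that every intermediate private query coincides with $q'$ (it does, as only masked values change) and that the shares of the non-flipped instances are generated consistently by $\mathcal{B}$ using its own randomness, so that each hybrid boundary embeds the FSS challenge cleanly.
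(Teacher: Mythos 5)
Your proposal is correct, and its core is exactly the paper's proof: a single reduction that extracts $f^0,f^1$ from $q_0,q_1$, forwards them to the FSS challenger, hands the returned $p-1$ shares (together with the common private query $q'$) to \adv, and echoes \adv's guess, so that any non-negligible advantage against \confgamename contradicts Theorem~\ref{thm:fss}. The paper's version is terser than yours: it never states explicitly that $q'$ is identical for both challenge queries, nor does it discuss simulating \adv's oracle access --- both points you handle explicitly and correctly. Where you genuinely diverge is the complex case, and there your extra machinery turns out to be unnecessary: \acro's design collapses every complex query to a \emph{single} FSS instance before \gen is invoked --- AND-conditions are concatenated into one condition $C = c_1 \| c_2 \| \dots \| c_n$, and (disjoint) OR-conditions are combined by summing the individual functions into one $f = f_{c_1} + f_{c_2} + \dots + f_{c_n}$ --- so your single-instance reduction already covers all supported queries, modulo the (paper-elided) caveat that the underlying FSS scheme must support these combined function classes. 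Your hybrid argument over $m$ instances would be the right tool if each condition were secret-shared independently, but that is not how \acro works, so you can drop it and keep only the first two-thirds of your argument.
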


\begin{theorem}
\label{thm:int}
\acro provides Response Integrity if for all probabilistic polynomial-time \adv there exists a negligible function \negl such that:
\begin{equation*}
    Pr[\intgameacro(\uplambda) = 1] \le \negl(\uplambda)
\end{equation*}
\end{theorem}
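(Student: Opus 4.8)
The plan is to prove Theorem~\ref{thm:int} by a reduction: I would show that any probabilistic polynomial-time \adv\ winning \intgamename\ with non-negligible probability must either forge a secret authentication value or violate the FSS confidentiality guaranteed by Theorem~\ref{thm:fss}. The argument hinges on how \verif\ is realized, so I would first pin down that construction. To make FSS results verifiable, \gen\ samples a uniformly random multiplier $r \in \mathbb{G}$ (retained by the client) and generates keys not for $f$ alone but for the augmented function whose output is the pair $(y,\, r\cdot y)$; thus each honest SP returns an aggregate $y_i$ carrying both a value component and a tag component, with $\sum_i (\text{value of } y_i)$ equal to the true answer and $\sum_i (\text{tag of } y_i) = r\cdot\sum_i(\text{value of } y_i)$. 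The client's \verif\ accepts only when this linear relation holds and then outputs $R = \sum_i(\text{value of } y_i)$.

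First I would characterize an adversarial win. In \intgamename\ the honest party $p$ contributes $y_p = \eval(K_p, D, q')$ computed correctly on all of $D$, and $R$ is the true evaluation of $q$, so by FSS correctness the honest outputs $y_i^{\text{hon}} = \eval(K_i, D, q')$ satisfy $\sum_{i=1}^{p} y_i^{\text{hon}} = R$ and pass the check. Because aggregation and the verification relation are linear over $\mathbb{G}$, whatever strategy \adv\ uses to produce $y_1^{\adv},\dots,y_{p-1}^{\adv}$ collapses to a single value-offset $\delta = \sum_{i<p}\bigl(\text{value of } y_i^{\adv} - \text{value of } y_i^{\text{hon}}\bigr)$ and a single tag-offset $\gamma$. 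The accepted result is then $R' = R + \delta$, so the winning condition $R' \neq R$ forces $\delta \neq 0$, while $R' \neq \perp$ forces the check to hold, i.e. $\gamma = r\cdot\delta$. Hence a winning \adv\ must produce, for some nonzero $\delta$ of its choosing, a tag-offset exactly equal to $r\cdot\delta$.

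Next I would argue that $r$ stays hidden from \adv. The adversary sees only $q'$ and the $p-1$ keys $K_1,\dots,K_{p-1}$, and $r$ enters the scheme solely through the FSS-protected output $(y, r\cdot y)$. By Theorem~\ref{thm:fss}, no strict subset of the keys reveals anything about the shared function, and in particular the stated consequence that the output value $y$ is confidential implies that $r$ is hidden; so from \adv's view $r$ remains uniform over $\mathbb{G}$ even given its keys and the oracle access granted in \intgamename. Since \adv\ must commit to $\delta$ and $\gamma$ without learning $r$, the probability that $\gamma = r\cdot\delta$ for $\delta \neq 0$ is at most $1/|\mathbb{G}|$; taking $\mathbb{G}$ exponentially large in $\uplambda$ yields $Pr[\intgameacro(\uplambda)=1] \le \negl(\uplambda)$.

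The hard part, and the step I would spend the most care on, is the reduction from arbitrary adversarial deviation to a single forged tag-offset. The adversary need not add a constant: it may compute each $y_i^{\adv}$ as any function of its keys, $q'$, and $D$, and could attempt to make $\delta$ itself correlate with $r$. Making the bound airtight requires showing that the per-record homomorphic structure of \eval\ forces verification to reduce to one linear equation in the two aggregates, so that any strategy is equivalent to guessing $r\cdot\delta$; and it requires ensuring the oracle access in \intgamename\ cannot leak $r$ across invocations, which means each call to \gen\ must draw fresh independent randomness for $r$. Once linearity and this freshness are established, the tag-forgery bound together with the appeal to Theorem~\ref{thm:fss} closes the argument.
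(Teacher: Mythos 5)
Your proof analyzes a scheme that is not \acro. You posit that \gen samples a secret multiplier $r$ and secret-shares an augmented function outputting pairs $(y,\, r\cdot y)$, so that \verif checks the linear tag relation $\sum_i(\text{tag}_i) = r\cdot\sum_i(\text{value}_i)$. But \acro's actual construction (\S\ref{detail:Simple_Query}, Algorithm~\ref{verif_alg}) has no tag component at all: \gen samples a random payload $y \in \mathbb{G}$ and builds the point/interval function with output value $y$ (rather than the conventional $y=1$), \eval returns one aggregate per bit-column of the intermediate table, and \verif accepts a column only if its sum across the $p$ servers equals exactly $0$ or exactly $y$, aborting otherwise. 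The authenticator in \acro is the FSS payload $y$ itself, not a separate MAC. Consequently your central reduction --- collapsing any adversarial strategy to a value-offset $\delta$ and a tag-offset $\gamma$ and bounding $\Pr[\gamma = r\cdot\delta]$ --- is a statement about your modified protocol, and it does not establish Theorem~\ref{thm:int} for the scheme the theorem is about. A proof of the stated theorem must argue about Algorithm~\ref{verif_alg}: to change any output bit without triggering $\perp$, \adv must shift a column sum from $0$ to $y$ or from $y$ to $0$, i.e., must produce $\sum_{j=1}^{p-1} y_j^{\adv} = y - y_p$ (or the analogous offset), which requires guessing the hidden $y$.

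That said, your underlying security intuition coincides with the paper's. The paper first proves Lemma~\ref{lem:simp} for a one-bit intermediate table: since $y$ is uniform in $\mathbb{G}$, cannot be inferred from $p-1$ shares (by the FSS secrecy of Theorem~\ref{thm:fss}), and cannot be influenced through $q$ or $s$, an adversary controlling $p-1$ servers succeeds only by guessing a group element, with probability $1/|\mathbb{G}| = \negl(\uplambda)$; the general $l$-column case then follows because making $R' \neq R$ without an abort requires flipping at least one bit, i.e., breaking the lemma. Your $1/|\mathbb{G}|$ tag-forgery bound is the exact analogue of this guessing bound, and your concerns about fresh randomness per \gen call and about $\delta$ correlating with the secret apply equally to $y$ in the paper's argument (the paper addresses the latter only informally). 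So the gap is not in the probabilistic core but in the object analyzed: you designed and proved a MAC-augmented variant (closer in spirit to Waldo), whereas the theorem concerns \acro's simpler mechanism in which the random FSS output value itself doubles as the integrity check.
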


The proofs for these two theorems are provided in Section~\ref{sec:proof} after various concepts are introduced.
 
\section{Query Execution}

For the simplicity we explain the execution of the queries with an example. Assume each block contains several records, each denoted by $O$ which represents an object/transaction id (note that each transaction or object that is generated has an id). And assume each record, say $O_i$, is a tuple of $<t_{i}, V_{i}, W_i>$, where $t$ represents the time that the $O$ is generated; $V$ is a numerical value of $O$, e.g., for a cryptocurrency $V$ is the transaction amount and in a non-cryptocurrency application it may refer to any numerical value required by that system (e.g., the amount of rental price in a blockchain-based rental car system or it could be an index value); $W$ is the set of attributes in the form of $\{Item,\ Price,\ Color\}$. Thus, a block content can be depicted as Figure~\ref{fig:block_content}. \acro executes queries over $W$. For simplicity, we assume a table representation of  $W$, as depicted in Figure~\ref{fig:block_rep}.

 \begin{figure}[http]
\begin{center}
\includegraphics[width=40mm]{./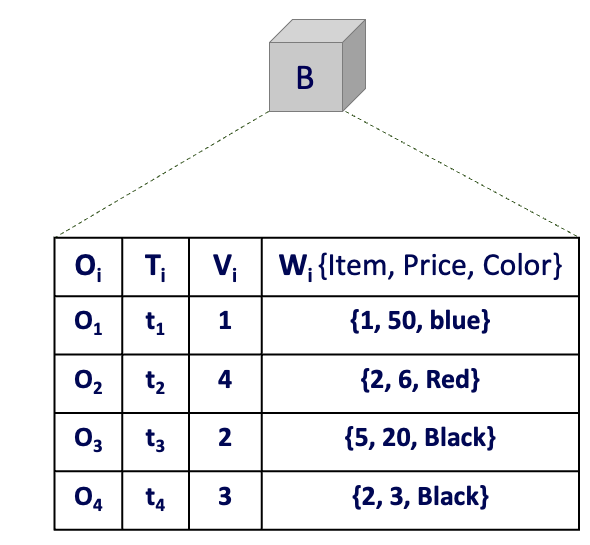}
\caption{Block Content} \label{fig:block_content}
\end{center}
\end{figure}     

 To execute a query, \acro first limits/specifies the number of blocks of the blockchain that it requires to search over based on the time window, denoted as {\it blk\_range\_cond}, which is provided in the query. Here, we assume that the {\it blk\_range\_cond}, returns four blocks (i.e., there are four blocks that their generation time falls between {\it blk\_range\_cond}) and each block contains only one record. Figure~\ref{table} depicts a table representation of the four blocks and their content. 

 \begin{figure}[h!]
\begin{center}
\includegraphics[width=70mm]{./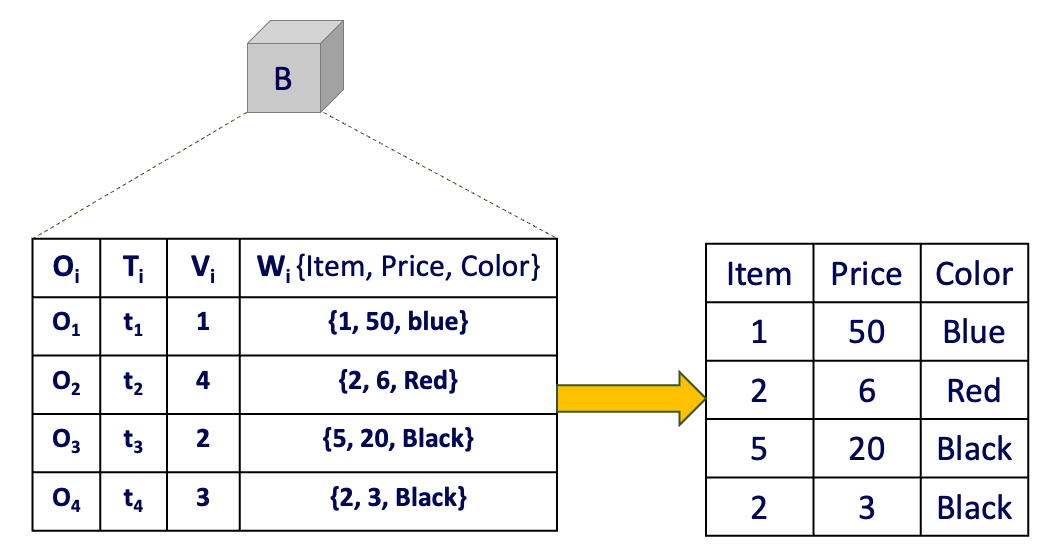}
\caption{Block Representation} \label{fig:block_rep}
\end{center}
\end{figure}

\begin{figure}[h!]
\begin{center}
\includegraphics[width=25mm]{./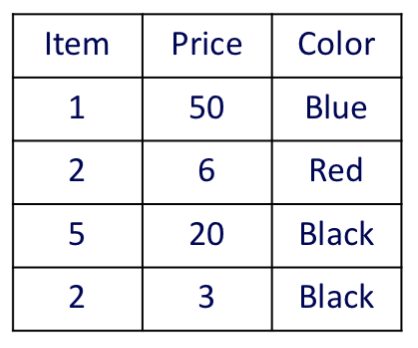}
\caption{Table} \label{table}
\end{center}
\end{figure}

We next exemplify \acro's steps via five queries:

\begin{itemize}
    \item $q_1$: {\justify {SELECT SUM($Price$) FROM $(1/06/2022)$ $<$ $blk_-range_-cond$ $<$ $(4/06/2022)$  WHERE $Item$=$2$}}.
    \item $q_2$: {\justify {SELECT COUNT($Item$) FROM $(1/06/2022)$ $<$ $blk_-range_-cond$ $<$ $(4/06/2022)$  WHERE $4 <Price< 10$}.}
    
\item $q_3$: {\justify {SELECT AVG($Price$) FROM $(1/06/2022)$ $<$ $blk_-range_-cond$ $<$ $(4/06/2022)$  WHERE $Item$ = $2$}.}
\item $q_4$: {\justify {SELECT MAX($Price$) FROM $(1/06/2022)$ $<$ $blk_-range_-cond$ $<$ $(4/06/2022)$  WHERE $Item$ = $2$ $\wedge$ $Color = red$ }}
\item $q_5$: {\justify {SELECT MIN($Price$) FROM $(1/06/2022)$ $<$ $blk_-range_-cond$ $<$ $(4/06/2022)$  WHERE  $2 \leq Item \leq 5$}}
\end{itemize}

Also, note that the $q'$ denotes the query with the hidden information, i.e., the client's sensitive information is hidden in $q'$. For example, $q_{1}'$ can be constructed as follows:

\justify {SELECT SUM($Price$) FROM $(1/06/2022)$ $<$ $blk_-range_-cond$ $<$ $(4/06/2022)$  WHERE $Item$ = ?}

\subsection{Simple Query Model} \label{detail:Simple_Query}

We now discuss how to construct \acro to serve a query with a single condition.

    \item $\gen(1^\uplambda,q,s)$:
    % \item $Init(q, s)$:  
    depending on the query condition, {\it single} or {\it range}, this function first samples a random value $y \in \mathbb{G}$ and constructs the $f$ function from $y$, where $f: \{0,1\}^n \rightarrow \mathbb{G}$. In this work, we assume the size of group $\mathbb{G}$ is exponential in terms of $\uplambda$. For instance, $f$ could be $f: \{0,1\}^n \rightarrow \mathbb{Z}_{2^\uplambda}$. Among the above query examples, $q_1$ and $q_3$ are {\it single} and $q_2$ and $ q_5$ are {\it range} conditions. Thus, if we pass $q_1$ and $q_2$ to the $\gen$ function, it behaves as follows:

     \begin{itemize}
         \item {Single condition -- $\gen(1^\uplambda, q_1, \{Item\})$}: \gen takes both $q_1$ and the secret attribute {\it Item} as input. This signifies that {\it Item} is sensitive information and the client aims to hide its value in the query.
         Thus, \gen randomly samples $y$ from $\mathbb{G}$ and defines $f$ as follows:

 \begin{equation} \label{one_condition2}
     f(x)= 
   \begin{cases}
       y, & \text{if } \   x = 2 \\
       0,              & \text{otherwise}
   \end{cases}
 \end{equation}
 
   \item {Range condition -- $\gen(1^\uplambda, q_2, \{Price\})$}: \gen takes as an input the $q_2$ and the secret attribute \textit{Price}.
  Again, it randomly samples $y$ from $\mathbb{G}$ and then defines the $f$ as follows:
 
  \begin{equation} \label{BCQL_range_condition}
    f(x)= 
  \begin{cases}
      y, & \text{if } \  1 < x < 10 \\
      0,              & \text{otherwise}
  \end{cases}
  \end{equation}
\end{itemize}

In this work, we assume that the range condition evaluates to one record only. To allow for multiple records within a specified range, \acro can be extended by adopting the interval evaluation technique used by Splinter~\cite{Splinter2017}, which we leave for future work.  After defining $f$, it then calls FSS.$\gen(1^\uplambda, f)$ to output $p$ function shares. The shares and the private query $q'$ are sent to SPs in a format of $\{ K_i, q'\}$.~\\

 \item $\eval(K_i, D, q')$: 
 the SP, upon receiving $\{ K_i, q'\}$, performs three operations on $D$ (blockchain data). First, it specifies/limits the number of blocks (to process the query) in $D$ according to $blk_-range_-cond$ in the $q'$. Second, it creates an intermediate table depending on the query condition (\textit{single} or \textit{range}, which can be extracted from $q'$). If the query is a {\it single} condition query, the $SP$ executes $GROUP BY$ over the column in the query condition (to recall the query condition, see Figure~\ref{table:QueryFormat}).
 And if the query is a {\it range} condition query, then the $SP$ creates the intermediate table over the column in the query type (to recall the query type, see Figure~\ref{table:QueryFormat}) without executing $GROUP BY$. Third, it generates the binary representation of the values of the column in the query type, denoted as $BIN$. 

 After that, \eval evaluates two sub-functions on the intermediate table: $(i)$ $comp$, stands for compute; and $(ii)$, $merg$, stands for merge. The $comp$ is equivalent to FSS.\eval except it is executed over the binary representation of the values of the column in the query type, $BIN$. That is, $comp$ takes as an input two values: $(i)$ $c$ that refers to the column in the condition, the value of which is hidden; and $(ii)$ $BIN$; and performs the FSS.\eval function over these two inputs.  $comp$ is executed $l$ times, where $l$ is the number of columns of $BIN$, in other word, the maximum number of bits required to represent the binary value of the data, $2^l$. The output of each execution of $comp$ is given to the $merg$ function. $merg$ takes as input all the values generated from $comp$ and merge/concatenate them together and generates a binary array, $y_i$, with $l$ elements. The $y_i$ is the final output of \eval and is returned to the client.  Algorithm~\ref{alg:eval_alg} details the \eval function.
  
\begin{algorithm}
    \DontPrintSemicolon
    $y_i \leftarrow $ [];\\
    $c \leftarrow \text{GetConditionColumn}(D, q')$; \\
    $interTable \leftarrow \text{CreateIntermediaryTable}(D, q')$; \\
    $BIN \leftarrow \text{GetBinaryRep}(interTable, q')$; \\
    $l \leftarrow \text{numColumn}(BIN)$; \\
    \For{$k = 0~\text{to}~l$} 
    {
        $b_k \leftarrow comp(K_i, c, BIN, k)$ ;  \CommentSty{ // see Eq.\ref{com_one_round}} \\
        $y_i \leftarrow merg(y_i, b_k)$; \\
    }
    \Return $y_i$
    \caption{\acro.$\eval(K_i, D, q')$}
    \label{alg:eval_alg}
\end{algorithm}

 The behaviour of this function, specifically the $comp$ sub-function, depends on the query types. We now explain this function in detail for each query type that \acro supports, namely, SUM, AVG, COUNT, MIN, and MAX. 

   \begin{itemize}
       \item SUM: assume that the $SP$ receives $\{K_i, q_{1}'\}$. $SP$ generates the intermediate table for this query, depicted in Figure~\ref{fig:sum_query}. Then, it generates the binary representation of the values of the column $SUM(Price)$, shown in Figure~\ref{fig:sum_query}, as the $BIN$ (recall that the size of each row in the $BIN$ is $l$, i.e., the maximum bits required to represent the binary value of the data). Finally, the \eval function is triggered. \eval function first executes the $comp$ subfunction which takes as an input $Item$ and $BIN$ as in Figure~\ref{fig:sum_query}. One execution of the $comp$ is as follows:
       
        \begin{equation} \label{com_one_round}
          b_k = \displaystyle\sum_{j=0}^{n} FSS.\eval(K_i, c_j) \cdot BIN[j][k] 
        \end{equation}
         This function is repeated for all the columns in $BIN$, Figure~\ref{fig:sum_query}. The outputs of each execution of $comp$, i.e., $b_k$, are sent to the $merg$ subfunction, where the $b_k$-s are merged/concatenated into an array $y_i$. 
         The final $y_i$ is then returned to the client.
         Algorithm~\ref{alg:eval_alg} depicts the \eval function execution.
      
  \begin{figure}[h!]
  \begin{center}
  \includegraphics[width=50mm]{./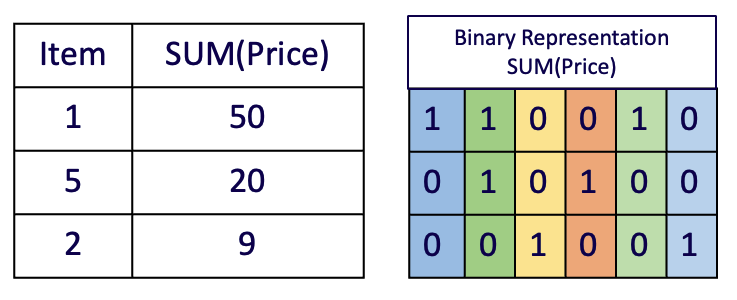}
  \caption{Intermediate Table and the binary representation of the column in query type generated for $q_1$.} \label{fig:sum_query}
  \end{center}
 \end{figure}

   \item COUNT: In this query type, the \eval execution is similar to the SUM-based query type. Assume $SP$ receives $\{ K_i, q_2\prime\}$. It creates the intermediate table and $BIN$ as depicted in Figure~\ref{fig:countquery}. Then, $SP$ can apply $comp$ sub-function for all columns in $c$ and appends the results into a single array $y_i$, which is in turn sent back to the client.
       
\begin{figure}[h!]
\begin{center}
\includegraphics[width=50mm]{./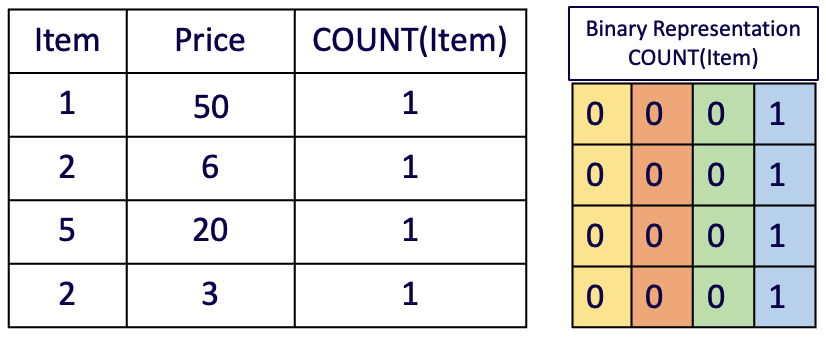}
\caption{Intermediate Table and the binary representation of the column in query type generated for $q_2$.} \label{fig:countquery}
\end{center}
\end{figure}

   \item AVG: This type of query results in a similar execution of \eval as the SUM- and COUNT-based query types. Suppose $SP$ receives $\{K_i, q_3\prime\}$, we refer to Figure~\ref{fig:avgeragequery} for the resulting intermediate table and $BIN$ variable in this query.

\begin{figure}[h!]
\begin{center}
\includegraphics[width=50mm]{./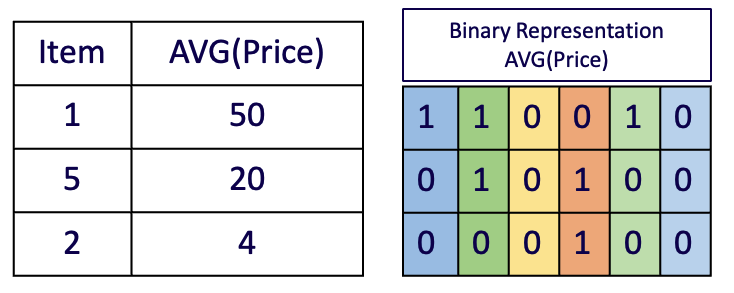}
\caption{Intermediate Table and the binary representation of the column in query type generated for $q_3$.} \label{fig:avgeragequery}
\end{center}
\end{figure}

    \item MAX: among above queries, the $q_4$ is a MAX query which is an example of the complex query model; hence, we defer its explanation to $\S$ \ref{query:complex}.

   \item MIN: if the $SP$ receives $Q_i = \{K_i, q_5\prime\}$, it first creates the intermediate table and then performs the $comp$ and $merg$ similar to the previous queries. We thus omit its detail for brevity.

   \end{itemize}
 
\item \verif$(y_1, ..., y_p)$: Upon receiving the results from all $SP$-s, the client executes the \verif function to output the final results. Recall that $y_i$ is a binary array of the outputs from the \eval function executed by $SP_i$ on each column of the $BIN$ matrix. Thus, in order to find the final result of the query, the client needs to perform the \verif function on each bit of the $y_i$. To perform this, the client follows these steps: $(i)$, it creates a binary matrix of the received $y_i$-s; $(ii)$, it adds the values of each column of the matrix and include it to a $temp$ variable. $if (temp == y)$ then it adds a binary value $1$ into the result array, otherwise $if (temp == 0)$ then it adds a binary value $0$ into the result array; if $temp$ is neither $y$ nor $0$, the client detects a misbehaviour and \textit{aborts} the protocol. This could be due to the server being malicious or other errors in the systems. Algorithm~\ref{verif_alg} details the \verif function.

\begin{algorithm}
    \DontPrintSemicolon
    \For{$i=0~\text{to}~len(y_1)$} {
        $temp \leftarrow \sum_{j=0}^p y_j[i]$;\\
        \CommentSty{// Recall $y$ is selected in \gen}~\\
        \uIf {$temp$ is $y$} {
            $R[i] \leftarrow 1$;
        } \uElseIf {$temp$ is $0$}{
            $R[i] \leftarrow 0$;
        } \Else{
            \CommentSty{// Detect misbehaviour}~\\
            \Return $\perp$; 
        }
     }
    \Return $R$;~\\
    \caption{\acro.$\verif(y_1,...,y_p)$}
    \label{verif_alg}
\end{algorithm}

\subsection{Proof}\label{sec:proof}

This section provides proofs for Theorem~\ref{thm:conf} and Theorem~\ref{thm:int}.

\subsubsection{Query Confidentiality}

Intuitively, \acro inherits the secrecy property from the FSS primitive and therefore \adv cannot infer anything about the secret value in the $f$ function, which is also considered sensitive in \acro.
To provide a more formal argument, we prove Theorem~\ref{thm:conf} via a reduction of \fssgamename in Definition~\ref{def:fss-game} to \confgamename in Definition~\ref{def:conf}. In other words, we aim to show that the existence of \adv that breaks \confgamename can be used to construct $\adv^{FSS}$ that breaks the FSS security guarantee in Theorem~\ref{thm:fss}.

\begin{proof}
    Suppose there exists \adv such that $Pr[\confgameacro = 1] > \negl(\uplambda)$.
    $\adv^{FSS}$ asks \adv to play \confgameacro and can simply use the game output $b'$ to win \fssgamename. This works because we can construct the \fssgamename sequence from \confgamename as follows:
    
    \begin{enumerate}
    	\item \adv can extract $f^0$ and $f^1$ from $q_0$ and $q_1$. Since these queries are sent to the challenger in \confgamename, these functions are also included in this message. (corresponding to Step 2 of \fssgamename)
    	
    	\item The challenger samples $b \leftarrow \{0,1\}$ and invokes \acro.\gen using $q_b$. Since \acro.\gen derives $f^b$ from $q_b$ and calls FSS.\gen on $f^b$, it means that in this step the challenger also computes $(K_1, ..., K_p) \leftarrow FSS.\gen(1^\uplambda, f^b)$ and outputs $p-1$ shares to \adv. (corresponding to Step 3 of \fssgamename)
    	
    	\item \adv outputs $b' \in \{0,1\}$. (corresponding to Step 4 of \fssgamename)
    \end{enumerate}
    
    Therefore, $\adv^{FSS}$ can win \fssgamename with the same non-negligible probability that \adv has of winning \confgamename.

\end{proof}

\subsubsection{Response Integrity}

To prove the response integrity property, we first show \acro satisfies \intgamename in a relaxed scenario, where an intermediate table can be represented using only one bit, hence proving Lemma~\ref{lem:simp}. Then, we can extend this proof to the generic scenario that requires an $l$-column intermediate table, for arbitrary value of $l$.~\\

\begin{lemma}
	\label{lem:simp}
	Theorem~\ref{thm:int} holds for a relaxed scenario where blockchain data $D$ is transformable to a one-bit intermediate table for the input query $q$.~\\
\end{lemma}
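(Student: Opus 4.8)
The plan is to reduce an adversary's success in \intgamename, restricted to the one-bit ($l=1$) regime, to the task of predicting the secret group element $y$ that the client samples inside \acro.\gen, and then to bound that prediction probability by the confidentiality that Theorem~\ref{thm:fss} grants to $y$. First I would pin down the honest computation when the intermediate table is a single bit. Each $y_i$ output by \eval is then one element of $\mathbb{G}$, and by FSS correctness the honest aggregate $\sum_{i=1}^{p} y_i$ equals $y$ times the number of matching records, so in the relaxed scenario $temp_{\text{honest}} \in \{0, y\}$. By Algorithm~\ref{verif_alg}, \verif maps $temp=0$ to bit $0$, $temp=y$ to bit $1$, and anything else to $\perp$; on honest inputs this reproduces exactly the direct-query answer $R$. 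This step is routine.

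Second I would recast the adversary's tampering as an additive deviation. Since \adv commits $y_1^{\adv},\dots,y_{p-1}^{\adv}$ before the honest share $y_p$ is computed, writing $\delta = \sum_{j=1}^{p-1}(y_j^{\adv}-y_j)$ for the total deviation from the honest values gives $temp = temp_{\text{honest}} + \delta$. Checking the two cases, avoiding $\perp$ while flipping the output bit forces $\delta = y$ when $temp_{\text{honest}}=0$ and $\delta = -y$ when $temp_{\text{honest}}=y$. Because \adv knows $D$ and $q$ it can compute $R$ and hence which target applies, but either way winning requires committing to $\delta \in \{y,-y\}$, that is, to the secret value up to sign.

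Third, the crux, I would argue that \adv selects $\delta$ from a view comprising only $q'$, the shares $K_1,\dots,K_{p-1}$, and $D$, which by Theorem~\ref{thm:fss} is computationally independent of $y$, while $y$ is uniform over $\mathbb{G}$ whose order is exponential in $\uplambda$. I would formalise this with a reduction: an \adv winning \intgameacro with non-negligible probability $\epsilon$ yields an FSS distinguisher that submits two point/interval functions differing only in their output value $y^0 \neq y^1$, receives the $p-1$ challenge shares, runs \adv on them (simulating its oracle access to $\Sigma$ by executing \gen, \eval, \verif locally, which is sound because those queries use fresh randomness independent of the challenge), extracts $\delta$, and outputs $0$ if $\delta \in \{y^0,-y^0\}$, $1$ if $\delta \in \{y^1,-y^1\}$, and a random bit otherwise. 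This distinguisher inherits advantage $\epsilon - O(1/|\mathbb{G}|)$, so $\epsilon \le 1/|\mathbb{G}| + \negl(\uplambda)$, which is negligible and establishes Lemma~\ref{lem:simp}.

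I expect the main obstacle to be this last reduction. I must verify that plugging the FSS challenge shares in as the $p-1$ corrupt keys yields a distribution of $(q', K_1,\dots,K_{p-1})$ identical to the genuine \intgamename challenger, that \adv's $\Sigma$-oracle interactions can be answered without $K_p$ and without leaking $y$ beyond what the $p-1$ shares already encode, and that the four targets $\pm y^0, \pm y^1$ are distinct except with probability $O(1/|\mathbb{G}|)$ so that the distinguisher's decision is unambiguous.
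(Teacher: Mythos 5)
Your proposal is correct, and its first two steps coincide with the paper's own proof: the paper likewise reduces winning \intgamename in the one-bit case to producing tampered shares whose sum hits a target offset by the hidden value $y$ (the paper writes the winning condition as $\sum_{j=1}^{p-1} y_j^{\adv} = y - y_p$, which is exactly your $\delta = y$ after cancelling the honest shares). Where you genuinely diverge is the final, crucial step. The paper bounds the success probability by a direct guessing argument: it asserts that $y$ and $y_p$ are random and unknown to \adv, so \adv can do no better than guess a single element of $\mathbb{G}$, giving probability exactly $\frac{1}{|\mathbb{G}|} = \negl(\uplambda)$. You instead give an explicit reduction to \fssgamename: an integrity adversary with advantage $\epsilon$ yields an FSS distinguisher that submits two functions differing only in their output values, recovers $\delta$ from the tampered shares (computable, since the distinguisher holds $K_1,\dots,K_{p-1}$ and can evaluate the honest $y_j$ itself), and decides according to which target set $\delta$ lands in. Your route is heavier but more rigorous: FSS hides $y$ only \emph{computationally}, so the paper's exact $\frac{1}{|\mathbb{G}|}$ bound implicitly treats $y$ as information-theoretically hidden, whereas your bound $\epsilon \le \frac{1}{|\mathbb{G}|} + \negl(\uplambda)$ is the correct statement for PPT adversaries. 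Your deviation-based formulation also sidesteps a small inaccuracy in the paper: when the honest answer is $R=0$, the honest share $y_p = -\sum_{j<p} y_j$ is in fact computable by \adv from the keys it holds, so it is not ``unknown''; the only hidden quantity is $y$, and your argument isolates exactly that. Two minor points to tidy up: your distinguisher's advantage is $\epsilon/2 - O(1/|\mathbb{G}|)$ rather than $\epsilon - O(1/|\mathbb{G}|)$, because the random-bit fallback halves it; and you should state explicitly that $y^0, y^1$ are sampled uniformly from $\mathbb{G}$ so that the simulated view matches the real game, in which \gen samples $y$ uniformly. Neither affects the conclusion.
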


\begin{proof}
	In this special relaxed scenario, we consider that after SPs convert $D$ into an intermediate table, this table contains exactly 1 column (or 1 bit). 
	
	The goal of this Lemma is to prove that \adv cannot win \intgamename except with negligible probability in this scenario.
	
	To prove that, we show that even if \adv learns and can manipulate the input query $q$, a list of secret expressions $s$, $p-1$ shares $K_i$ and $p-1$ share outputs $y_i$ (see Definition~\ref{def:int}), he cannot force \acro.\verif (which is performed by the challenger) to produce a valid output without aborting. With a one-column intermediate table, \acro.\eval outputs a single element $y_i \in \mathbb{G}$ and \acro.\verif returns a bit $R$, where:
	\begin{equation}
		\label{eq:simp-sce}
		R =
		\begin{cases}
			0 & \text{if $\sum_{j=1}^p y_j = 0$}\\
			1 & \text{if $\sum_{j=1}^p y_j = y$}\\
			\perp & \text{otherwise}
		\end{cases}  
	\end{equation}
	 We note that $y$ is generated in \acro.\gen and cannot be inferred from $p-1$ shares (due to the secrecy property of FSS in Theorem~\ref{def:fss-game}). It also cannot be influenced by the input query $q$ or the secret expressions $s$ since $y$ is selected at random; in short \adv cannot uncover or manipulate $y$.
	
	Recall that to win \intgamename in this scenario, \adv must output $R'$ such that $R' \neq \perp$ and $R'$ must be different from the benign output produced by an honest execution of the protocol. Without loss of generality, we assume if all SPs are honest, \acro.\eval will output $R=0$; thus to win the game, \adv goal is to produce $R'=1$. Suppose \adv can compromise all SPs except SP$_p$. Then, from Equation~\ref{eq:simp-sce}, to get $R'=1$, 
	\adv must output $y_1^{\adv}, ..., {y}_{p-1}^{\adv}$ such that
	$\sum_{j=1}^{p-1} y_j^{\adv} = y - y_p$. Since $y$ and $y_p$ are random and unknown to \adv, he does not have any other advantage of coming up with $y-y_p$ from $y_1, ..., y_{p-1}$ except for directly guessing one element in $\mathbb{G}$ and hoping that this element corresponds to $y-y_p$. The probability of \adv correctly guessing $y-y_p$ and thus breaking Lemma~\ref{lem:simp} is $\frac{1}{|\mathbb{G}|} = \frac{1}{\exp(\uplambda)} = \negl(\uplambda)$.
	
\end{proof}

To win \intgamename for a generic scenario with an $l$-column intermediate table and hence prove Theorem~\ref{thm:int}, \adv must be able to trick the challenger to execute \acro.\verif and output an $l$-bit $R'$ such that (1) no abort has happened during execution and (2) $R'$ is different from the benign value $R$. However, in order to do that, \adv must get at least one bit from $R'$ to be different from the bit at the same location in $R$. Doing so requires \adv to break Lemma~\ref{lem:simp}. Hence, \adv can win \intgamename with the same negligible probability that \adv has of breaking Lemma~\ref{lem:simp}. $\blacksquare$

\subsection{Complex Query Model}\label{query:complex}

In this section, we explain how complex queries are executed. As mentioned, complex queries are the ones with the combination of $AND$ and $OR$ in the query condition. That is, the query can have more than two conditions. We support the following complex queries:

\begin{itemize}
    
\item AND - Single Conditions: if the complex query is formed by the AND, we concatenate the conditions into one condition and behave the newly formed query as a simple model query. Assume that the query condition is $c_1 = secret_1$ AND $c_2 = secret_2$ AND ... AND $c_n= secret_n$, where each query condition is a single condition. \acro concatenates these conditions into one: $ C = c_1 || c_2 || ... || c_n$. 
This results in a single condition query of the simple model which \acro can process, explained in $\S$\ref{detail:Simple_Query}.

\item OR - Single Condition: if the complex query is formed by the OR and the query conditions are \emph{disjoint}, i.e., $c_1$ does not overlap with $c_2$, then we create $f$ for each condition individually and then add all $f$-s together. Assume that the query condition is $c_1 = secret_1$ OR $c_2 = secret_2 $ OR $...$ OR $c_n= secret_n$. \acro generates $f$ for each $c_i$ using the simple model functions, then combine all the generated functions, i.e, $f = f_{c_1} + f_{c_2} + ... + f_{c_n}$. And finally it passes $f$ to the \acro.\gen function, explained in the simple model above, $\S$\ref{detail:Simple_Query}

\end{itemize} 

Below, we explain an example of a complex query.

 Among the above queries, $q_4$ is a complex query as it contains $\wedge$ in the query condition. Now assume that the $SP$ receives $\{K_i, q_4\prime\}$. In this type of query, compared to the previous queries explained in $\S$\ref{detail:Simple_Query}, $SP$ performs one additional task when creating the intermediate table. It first creates a join column consisting of two columns in the query condition (i.e., for $q_4\prime$, it is $Color$ and $Item$). And then it performs the $GROUP BY$ over this newly created column. Thus, in this example, $q_4\prime$, $SP$ creates the intermediate table as shown in Figure~\ref{fig:maxquery} based on the joint $Color-Item$ columns. Later, it generates $BIN$ (see Figure~\ref{fig:maxquery}) for the column $MAX(Price)$. After the setup is finished, then it performs $comp$ and $merg$ on this intermediate table. The rest of this query is similar to execution of the rest of queries explained in $\S$ \ref{detail:Simple_Query}.
 
 \begin{figure}[h!]
    \begin{center}
     \includegraphics[width=60mm]{./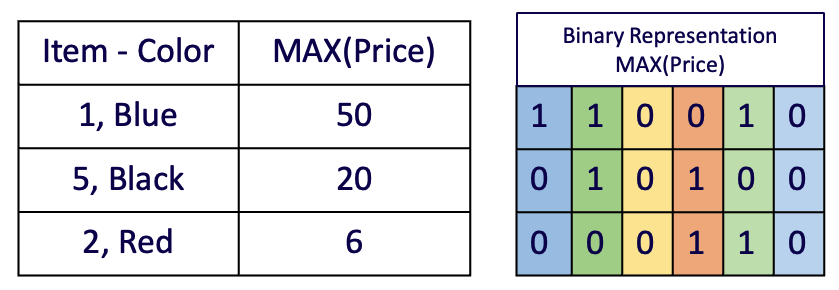}
     \caption{Intermediate Table and the binary representation of the column in query type generated for $q_4$.} \label{fig:maxquery}
      \end{center}
 \end{figure}

\ignore {\oak{Wait, we discussed this before; the solution below does not work... You cannot stop after it finds the result. You need to execute all conditions in the combination in order to get this to work.}
As mentioned, the OR query condition only works when the conditions are disjoint. We propose a naive solution when the conditions are \emph{not} disjoint; however, it is not added to the design of \acro. We believe this could be a good starting point for addressing this problem. The potential solution is as follows: 

If the query conditions are not disjoint, then a table condition can be created that combines all the states that might return true (similar to OR gate) for the query conditions. That is, assume the query comprises of two conditions, $c_1$ and $c_2$. Here, the table is created of all the combination/states of the two conditions, i.e., $2^s-1$, where $s$ is the number of conditions. This results in a $4$ combinations: $(i)$, $c_1 = secret_1 \ ||\ c_2 = X$; $(ii)$, $c_1 = X \ ||\ c_2 = secret_2$; $(iii)$ $c_1 = secret_1 \ || \ c_2 = secret_2$. Then, the first combination of the secrets can be executed by following the steps described in simple mode, $\S$\ref{detail:Simple_Query}. If \verif function outputs a result for this combination, then the client can stop execution here and output the results. However, if the first combination could not find any result in the $D$, then the system needs to execute the second combination. This needs to continue until either \verif outputs a result or all the combinations are executed.  }

\section{Evaluation} \label{evaluation}

To evaluate \acro, we compared its performance, in terms of \textit{computation cost} ($\S$\ref{computation_comp}), \textit{bandwidth} ($\S$\ref{bandwidth_comp}), and \textit{latency} ($\S$\ref{latency_comp}), with two baselines: Splinter and Waldo. We selected these baselines as they both applied the FSS to provide query's \textit{confidentiality}, similar to \acro. 
%Note that we disregarded the comparison of \acro with vChain as vChain's aims are different compared to \acro; hence the methods applied in vChain are different and not directly comparable with \acro.

\subsection{Computation Comparison} \label{computation_comp}
Computation cost/complexity refers to the computations required to process the query and retrieve the data. This cost is $O(N \times log N)$ for Splinter, where $N$ is the number of database records. \acro requires less asymptotic time, i.e., $O(N \times l)$, where $l$ is a small constant indicating the number of bits required to represent each database column. Whereas Waldo's computation cost grows linearly with the number of query predicates: $O(N \times p)$.

We calculated the computation complexity of \acro and then plotted its behaviour for different number of data records in the database. We then plotted the behaviour of Splinter and Waldo over the same set of data records used for \acro. For Waldo and Splinter we used the computation complexity provided in the original papers. As depicted in Figure~\ref{fig:computation_cost}, \acro performs slightly better compared to the baselines when the number of records in the database are small. However, when the record size increases, the performance of Waldo and Splinter deteriorates significantly compared to \acro. More specifically, when the number of data records increases from $10^{8}$ to $10^{9}$, we can see that \acro performs faster (a magnitude of two) compared to Splinter. This difference is even more when we compare \acro and Waldo, where for $10^{9}$ number of records \acro performs over four times faster than Waldo (with $2 -predicates$). When the number of predicates increases Waldo's performance degrades the most compared to Splinter and \acro.

\begin{figure}[h!]

\begin{center}
\includegraphics[width= 90mm]{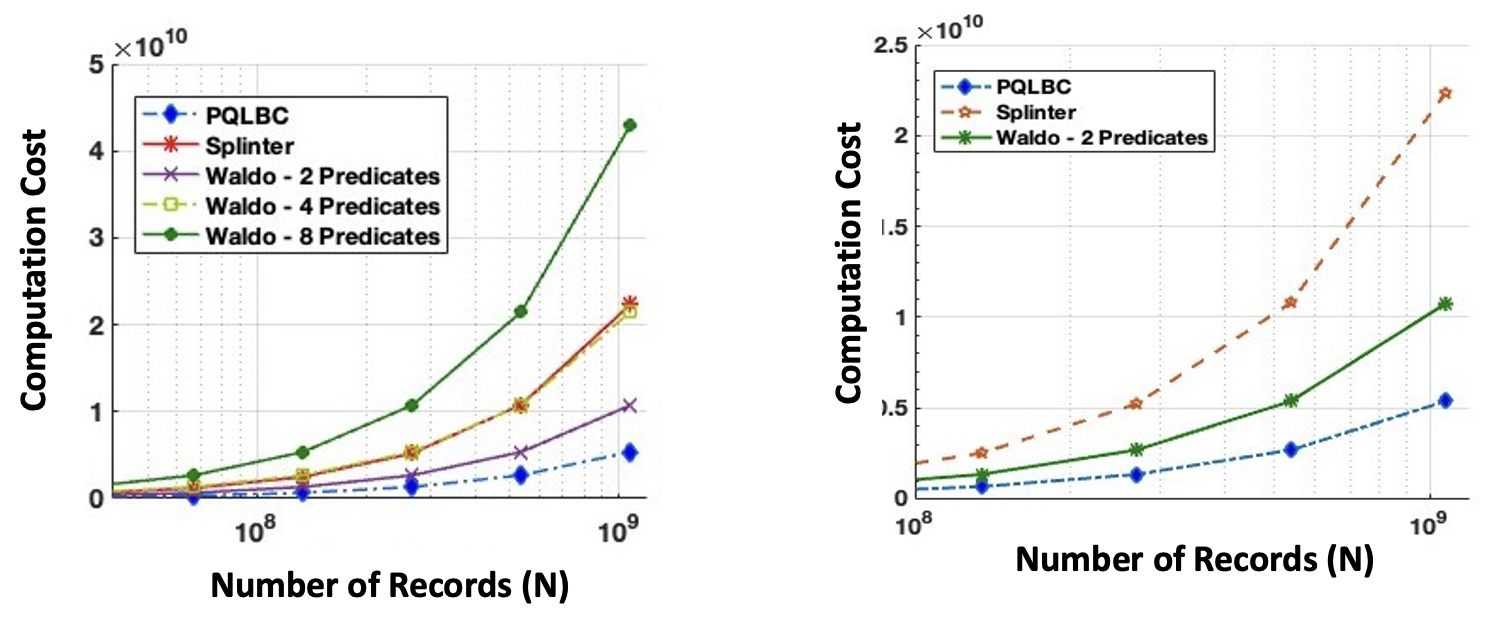}
\caption{Computation cost comparison} \label{fig:computation_cost}
\end{center}
\end{figure}

\subsection{Bandwidth Comparison} \label{bandwidth_comp}

After we evaluated the computation cost of \acro, we switched our attention to the bandwidth comparison. We calculated the client and server bandwidth separately and compared the \acro's results with the baselines. Figure~\ref{fig:bandwidth} illustrates the evaluation results. For the client bandwidth, we compared \acro with Waldo only as Splinter did not provide enough information on the paper to allow us to compute the client's bandwidth. The results show that in \acro, a client requires slightly less bandwidth to perform queries compared to Waldo. This is due to the fact that Waldo generates two pairs of FSS keys for each SP. That is, when the number of SPs are two, Waldo's client generates in total $four$ pairs of keys to send to the SPs. While in \acro client only generates one key for each SP. 

  \begin{figure} 
  \begin{center}
\includegraphics[width= 90mm]{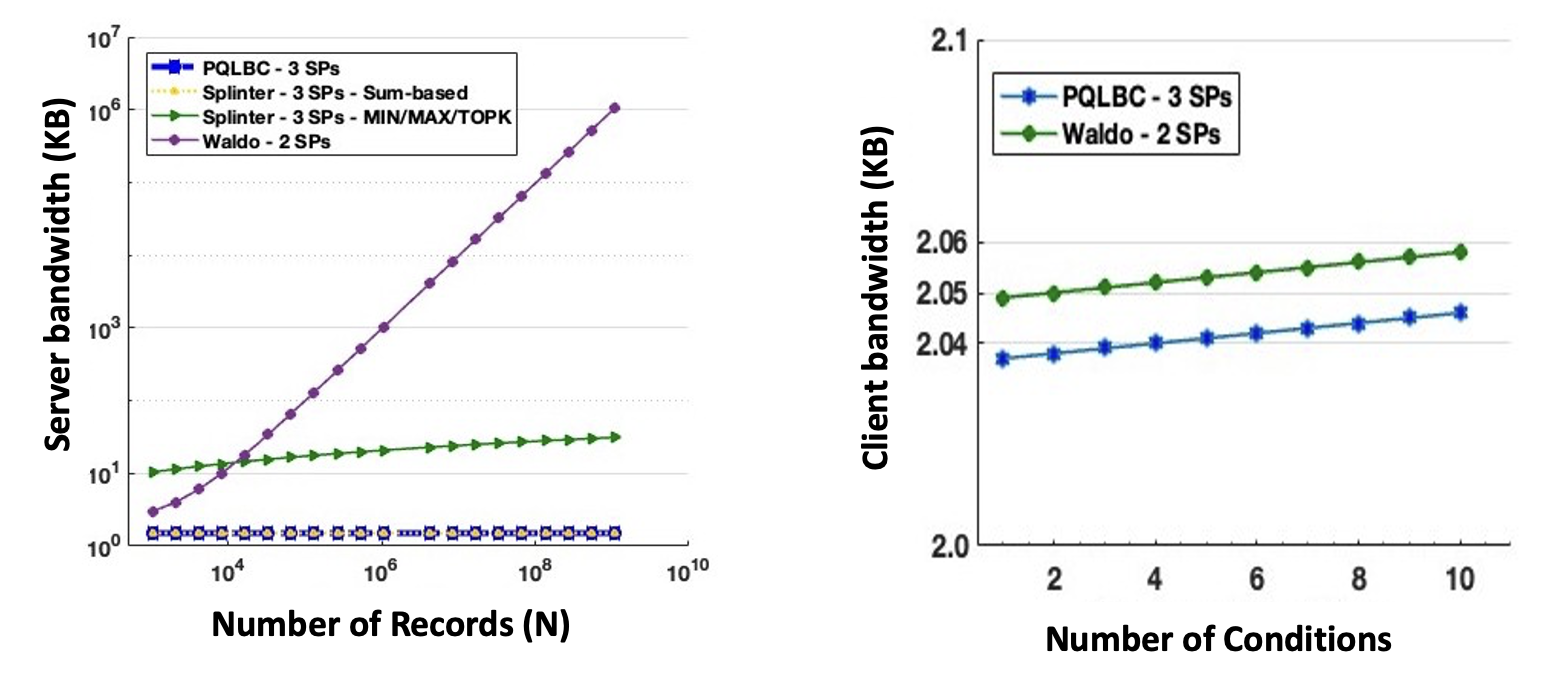}
\caption{Client and server bandwidth comparison} \label{fig:bandwidth}
\end{center}
\end{figure}

However, the server's bandwidth is computed based on: $(i),$ the round trips between client and server, $(ii),$ the size of the data that the server sends to client, and $(iii),$ the Maximum Transmission Unit, $MTU$, each server has. We assume all the servers are running on AWS (Amazon Web Services) EC2 instances and each instance type is $m5.xlarge$ with $10\ Gbps$ network bandwidth and the $MTU$ is $1500$ bytes. As illustrated in Figure~\ref{fig:bandwidth}, the server's bandwidth in \acro is constant and independent from the number of records. This is exactly similar in Splinter for sum-based queries. However, Splinter requires $10$ times more bandwidth for TOPK, MIN, and MAX queries compared to \acro. This is due to the extra communication between server and client for building the intermediate table in Splinter. Waldo, however, requires the highest bandwidth compared to \acro and Splinter as it grows linearly when the number of records increases.

\subsection{Latency Comparison} \label{latency_comp}

Last, we evaluated the latency. We assumed the minimum latency between client and the server (SP) for each message is $1 \ ms$. To plot the results, we computed the latency for one query over different size of records and repeated it for \acro, Splinter, and Waldo (with $2, 4,$ and $8$ predicates). Note that the number of predicates does not affect the latency of the \acro and Splinter. However, the latency of Waldo varies depending on the number of predicates. As illustrated in Figure~\ref{fig:latency}, \acro has the lowest latency compared to the baselines. More specifically, when the number of records are small (below $2^{27}$), \acro, Splinter, Waldo-2-predicates, and Waldo-4-predicates have almost the same latency, less than $2$ seconds. However, when the number of records increases,  the latency surges significantly for Splinter and Waldo, while \acro experiences a moderate increase in latency. For example, when the number of records is increased to $2^{28}$, \acro's latency is still $1$ seconds; however, Splinter's and Waldo-4-predicates's latency surges to around $5$ seconds. For the maximum records of $2^{30}$, \acro's latency is around $5$ seconds which is a factor of $2$ and $4$ times less than Waldo-2-predicates, and Splinter and Waldo-4-predicates, respectively. Waldo-8-predicates experiences the highest latency. 
 
 \begin{figure}[h!]
\begin{center}
\includegraphics[width= 40mm]{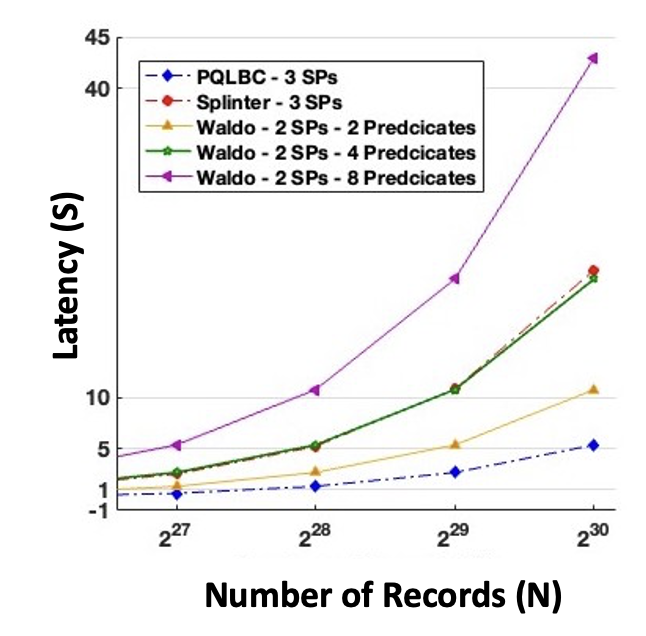}
\caption{Latency comparison} \label{fig:latency}
\end{center}
\end{figure}

\section{Related work} \label{relatedwork}

As previously elaborated,  \acro's main goal is to provide query {\it confidentiality} and response {\it integrity} for performing search queries over a blockchain database. There has been a large body of studies conducted in a similar area (i.e., secure and private search over public data) and this section aims to describe such studies and put our work context.

Related work can be classified into three categories: $(i)$, those providing {\it query confidentiality}, $(ii)$, those enabling queries over {\it encrypted database} to protect the stored data, $(iii)$ those enabling queries over {\it blockchain database}. Compared to the work in $(i)$, \acro provides integrity for the query responses, improving the security by a large factor. The second classification aims to solve a different problem compared to \acro, as they protect private and sensitive data from a malicious server while providing a search function. And the final classification is very closely related to \acro, as \acro is proposed for blockchain databases. 

\subsection{Queries on database with query confidentiality}

\subsubsection{ PIR Systems \cite{PIR-1995-Chor, xun-2013-PIR, PIR_1998, survey_on_PIR_2007,olumofin2011revisiting,ongaro2014search}}
In $1995$, \citet{PIR-1995-Chor} introduced private information retrieval (PIR) that allows a user to retrieve an element of a database in a setting where the server is untrusted (i.e., the client does not want to reveal to the server what information s/he is retrieving). In other word, the goal of PIR is to enable querying the $i$-th item of a database with $n$ items, without revealing $i$. 
 
The introduction of PIR led to the generation of two new schemes: $(i)$ {\it information theoretic PIR} (IT-PIR), also known as multi-server PIR~\cite{PIR-1995-Chor, demmler2014raid, PIR_2012_Optimally, goldberg2007improving_PIR}; and $(ii)$ {\it computational PIR} (CPIR), also known as single-server PIR~\cite{brakerski2014efficient_homomorphic_enc,cachin1999computationally_PIR, kushilevitz1997replication_PIR, xun-2013-PIR, melchor2016xpir, SealPIR2018_S&P}. The IT-PIR scheme comprises of more than one server and each server hosts a replica of the database. The client sends a query to all the servers and combines all the responses received from the servers. IT-PIR relies on a strong security assumption; i.e., the servers are honest and do not collude together. 
Meanwhile, a limitation of CPIR relates to the high computational cost of the cryptographic operation on each element of the database.

%The PIR scheme has been used to provide private search over databases in different settings. \citet{olumofin2010privacy} applied PIR  to enable SQL-like queries over a rational database.
%Also, the work in \cite{Popcorn_2016_PIR} applied PIR to allow Netflix users privately search movies on the servers. 

%The PIR scheme has however some limitations. One of them relates to the fact that this is based on the assumption the server is the honest entity of the protocol, i.e., it follows the protocol and does not behave maliciously (and do not collude with each other). This is rather a strong security assumption. Thus, the systems that apply PIR also inherits this limitation. Additionally, in some examples the server might be able to learn the query of the user from the access patterns of the user in the memory. 

\subsubsection{Splinter \cite{Splinter2017}}

 In $2017$, \citet{Splinter2017} applied FSS~\cite{fss_2015} cryptographic primitive to build a new query scheme, called {\it Splinter}, that facilitates private SQL-like queries over a database. The database is public, i.e., is accessible by everyone to read (but not to write) and it is based on a centralized setting (i.e., it is owned/controlled/modified by one organization/company). Splinter supports various types queries, such as SUM, COUNT, AVG, STDEV, MIN, MAX, and TOPK queries. Although it has achieved significantly better performance compared to previous private information retrieval systems, thanks to the use of FSS, it has some limitations: $(i)$ it does not support the {\it integrity} of the query response. That is, Splinter assumes that the service providers are honest (i.e., faithfully following the protocol); $(ii)$ Splinter is designed for the centralized settings, i.e., the database is controlled by one organization. Hence, it cannot be applied in the context of the public blockchain.

 Splinter is closely related to \acro as they both have applied FSS cryptographic primitive to facilitate private search over databases. Nonetheless, \acro additionally addresses the aforementioned limitations, making query processing applicable to the blockchain setting. 
 
\subsubsection{Waldo~\cite{dauterman2022waldo}} It is a private time-series encrypted database that supports multi-predicate filtering while hiding the query content and search patterns. Waldo used FSS to achieve their security guarantee. It relies on the majority honest parties, and has defined up to $three$ parties. This means that they allow only one malicious party. They have achieved a better security compared to their related work based on time-series databases. 
We considered Waldo to be related to \acro as both of them are based on FSS. Nonetheless, Waldo targets a private setting, where the user has a full control over the data on each service provider, which is not applicable to the public setting in blockchain.

\subsubsection{Garbled circuit~\cite{bellare2012garblled, goldwasser1997multiparty,yakoubov2017gentle}} 
It is a cryptographic primitive that provides secure multiparty computations. Embark \cite{lan2016embark} applied this method to design a system that enables a cloud provider to support the outsourcing of network processing tasks (e.g., firewalling, NATs, web proxies, load balancers, and data exfiltration systems), similar the way that compute and storage are outsourced; while maintaining the client’s {\it confidentiality}. To achieve this, Embark first encrypts the traffic that reaches the cloud and then the cloud processes the encrypted traffic without decrypting it. Another work, namely BlindBox \cite{sherry2015blindbox}, also used Garbled circuits to perform deep-packet inspection directly on the encrypted traffic. BlindBox is suggested to be a good candidate for applications such as intrusion detection (IDS), exfiltration detection, and parental filtering. These works mainly have used garbeled circuits to perform private communication on a single untrusted server. The main issue with these proposals is the high computation and bandwidth costs for queries when the datasets are large as a new garbled circuit needs to be generated for each query.

\subsubsection{ORAM}

Oblivious RAM (ORAM) is a cryptographic primitive that protects the privacy of clients by hiding the memory access patterns seen by an untrusted storage server. It basically eliminates the information leakage in memory access traces. In an ORAM scheme, a client
can accesses data blocks that are hosted on an untruseted server. However, for any two logical access sequences of the same length, the communications between a client and a server cannot be distinguished. Several work \cite{lorch2013shroud,stefanov2018path_ORAM, ORAM_ren2015constants} have applied ORAM to develop a secure access model to stored data. However, ORAM is not applicable in the settings where a client is resourced constrained, i.e., it does not have enough memory to be able to load a large volume of data from the server. Additionally, ORAM allows clients to access/download the data that they have written on the server. Hence, it is not suitable for the case where the database is public where everyone can write/read to/from it.

\subsection{Queries on encrypted databases}

 Several prior work~\cite{popa2011cryptdb, demertzis2020seal, fuller2017sok, papadimitriou2016big,pappas2014blind, popa2014building,mahajan2011depot,li2004secure} proposed methods to run queries over an encrypted database. Data in such systems is encrypted before stored, and queries are executed over encrypted data. The aim here is to protect private data from an untrusted or a compromised server. Hence, the purpose of such systems is different from the one of \acro, which aims to protect the confidentiality of queries over a public database, where servers could be malicious and the database is publicly accessible. 

\subsection{Queries on blockchain}

 Lastly, current search approaches for blockchain databases require users to maintain the entire blockchain to run queries (in order to support query integrity). This is not an economical approach as the maintenance cost of blockchain is significant and blockchain’s data size is considerably large. Many companies, including IBM \cite{IBM-Blockchain}, Oracle \cite{Oracle-Blockchain}, FlureeDB\cite{FlureeDB}, and BigchainDB \cite{BigchainDB} provided a search functionality for blockchain databases that unfortunately rely on a central trusted party that executes user’s queries. This obviously causes a single point of failure and {\it confidentiality} as well as {\it integrity} guarantees are not also supported.

 In $2019$, \citet{xu2019vchain} proposed vChain to address a limitation in the previous search schemes for blockchain. vChain's main focus was to provide the {\it integrity} for the search without requiring nodes to download the entire blockchain database. It enabled verifiable Boolean range queries for blockchain databases, where clients are able to verify the results of their searches. Although vChain did achieve a good search performance and provided integrity, it did not consider the {\it confidentiality} of clients' queries.  
 
VQL \cite{wu2021vql} is another work that has been proposed for search improvement on blockchain databases. To provide both efficient and verifiable data queries, VQL adds a Verifiable Query Layer to the blockchain systems. This middleware layer extracts transactions stored in the underlying blockchain system and reorganizes them in databases to provide various query services for public users. It has also applied a cryptographic hash value for each constructed database to prevent data being modified. The main issue with this approach is clearly the extra layer added to the blockchain system. This directly impacts the blockchain size.
Additionally, it degrades the overall performance of the blockchain systems as reorganizing and creating the extra layer incur an overhead on top of the tasks the blockchain systems need to perform. 
More importantly, VQL does not provide the query {\it confidentiality}.
 
\section{Conclusion} \label{conclusion}

  This paper described \acro, a query language for blockchain systems that ensures \emph{both} confidentiality of query inputs and integrity of query results. To achieve \textit{confidentiality} property \acro has applied the function secret sharing primitive (FSS). And, to support \textit{integrity}, we extended the traditional FSS setting in such a way that integrity of FSS results can be efficiently verified. \acro also introduces relational data semantics into existing blockchain databases and enables SQL-like queries over the data. We evaluated the performance of \acro in terms of bandwidth, latency, and computation costs. The evaluation results showed that \acro performs better when compared with the baselines.
  
  In future work, we aim to enhance the search performance by applying the indexing to the stored data and enabling the FSS keys to search over the indexed data. Moreover, we plan to explore system integration and deployment aspects in greater depth, where we will analyze the real-world implementation of \acro and its integration with current blockchain architectures.
  
\bibliographystyle{IEEEtranN}
\bibliography{IEEEabrv,References}

\end{document}